\documentclass[11pt,twoside]{article}
\raggedbottom 
\usepackage[T1]{fontenc} 



\usepackage{amsthm, amsmath, amssymb, mathrsfs, upgreek, enumerate, mathtools,
  comment, natbib, subfigure, graphicx, mathabx, booktabs, threeparttable,
tikz, relsize, exscale, epstopdf, multirow, dsfont}
\usepackage[headheight=110pt,margin=1.25in]{geometry}
\usepackage[section]{placeins} 
\renewcommand{\(}{\left(}
\renewcommand{\)}{\right)}
\renewcommand{\[}{\left[}
\renewcommand{\]}{\right]}
\usepackage{fancyhdr, setspace} 
\pagestyle{fancy}
\fancyhead{} 
\fancyhead[CE]{\sc Feng, Hong, Tang, and Wang}
\fancyhead[CO]{\sc Decision Making with Machine Learning and ROC Curves}
\setlength{\headsep}{10pt}
\setlength{\voffset}{12pt}

\usepackage{tcolorbox}
\usepackage[nolists,noheads,nomakers]{endfloat}

\usepackage{hyperref}
\usepackage{subfigure}
\usepackage{marginnote}
\hypersetup{
    colorlinks=true,
    citecolor=blue,
    filecolor=black,
    linkcolor=red,
    urlcolor=blue,
    bookmarksopen=true,
    pdfstartview=FitH
}
\onehalfspacing



\usepackage{CJK}
\usepackage[utf8]{inputenc}

\newtheorem*{theorem*}{Theorem}

\newtheorem{lemma}{Lemma}[section]

\theoremstyle{definition}

\makeatletter
\def\thm@space@setup{
  \thm@preskip
15pt \thm@postskip=15pt 
}
\makeatother

\setlength\belowcaptionskip{10pt} 


\renewcommand{\qed}{\hfill \mbox{\raggedright \rule{0.08in}{0.08in}}} 
\renewenvironment{proof}[1][\proofname]{{\noindent\sc#1. }}{\qed\vspace{15pt}} 



\title{\bf\sc Decision Making with Machine Learning and ROC Curves}


\author{Kai Feng\thanks{Department of Computer Science, Beihang University. fengkai@buaa.edu.cn} \and Han Hong\thanks{Department of Economics, Stanford University. doubleh@stanford.edu} \and
Ke Tang\thanks{Institute of Economics, School of Social Science, Tsinghua University. ketang@tsinghua.edu.cn}
\and Jingyuan
Wang\thanks{Department of Computer Science, Beihang University. jywang@buaa.edu.cn}
}

\begin{document}
\maketitle

\begin{abstract}
  \noindent {\sc Abstract.} The Receiver Operating Characteristic (ROC) curve is a representation of the statistical
information discovered in binary classification problems and is a key concept in
machine learning and data science. This paper studies the statistical properties
of ROC curves and its implication on model selection. We analyze the
implications of different models of incentive heterogeneity and information
asymmetry on the relation between human decisions and the ROC curves. Our
theoretical discussion is illustrated in the context of a large data set of
pregnancy outcomes and doctor diagnosis from the Pre-Pregnancy Checkups of
reproductive age couples in Henan Province
provided by the Chinese
Ministry of Health.
  \vspace{15pt}

  \noindent {\sc Keywords}: ROC Curve, Binary Classification, Neyman Pearson Lemma, Incentive Heterogeneity, Information Asymmetry.
\end{abstract}

\newpage

\section{Introduction}\label{introduction}
In the era of artificial intelligence, much attention has been drawn to the potential of machine learning in assisting human decision making. Among the recent headlines, Google used a deep learning algorithm to detect diabetic retinopathy in retinal fundus photographs (\cite{gulshan1}). \cite{long1} proposes a new deep learning algorithm, CC-Cruiser, which can be used to diagnose cataracts and provide treatment advice. In the economics literature, \cite{currie3}'s analysis of the decision making of physicians suggests the possibility of improvements that can both benefit patients and reduce medical expenses.

In this paper, we focus on the binary classification decision making problem, i.e. a choice between ``yes'' and ``no'', or $1$ and $0$. Binary classification is a foundational building block of statistical decision making in a variety of disciplines including machine learning, data science, and econometrics. It takes different forms and interpretations in different empirical settings: ``ill or healthy'' in a medical diagnosis by physicians, ``jail or release'' in a verdict by sitting Judges, and ``accept or reject'' by a collage admission committee.

A key concept that is used to evaluate the quality of binary classification and prediction is the Receiver Operating Characteristic (ROC) Curve, which essentially measures the diagnostic performance of a binary classifier as its cutoff threshold is varied. Since its first appearance in \cite{fisher1}, ROC curves and its variants are widely used in analyzing empirical data. However, its statistical properties do not appear to be well-understood. This paper serves to bridge such an important gap, and present statistical properties of ROC curves in the context of decision making under incentive heteroscedasticity and information asymmetry.

For a data set with labels $Y_i \in \{0,1\}$, and features $X_i,i=1,\ldots,n$, a learning algorithm makes predictions $\hat Y_i \in \{0,1\}$ for $Y_i$. An algorithm that generates a higher proportion of correct predictions, defined as the accuracy
\begin{eqnarray*}
  \text{ACC}=\frac{1}{n} \sum_{i=1}^n \mathds{1}\(\hat Y_i = Y_i\),
\end{eqnarray*}
naturally has more appeal.

However, accuracy itself is unlikely to be sufficient to characterize the quality of
prediction algorithms. In the National Free Pre-Pregnancy Check-ups (NFPC) data set that we studied as an application of this paper, the accuracy of doctors' diagnosis of high-risk pregnancy $\hat Y_i$ in predicting birth defects ($Y_i$) is about 80\%. Yet less than 5\% of the births are abnormal. A simple prediction of all pregnancy as low risk would result in an accuracy of 95\%. Nevertheless, the precision of doctor's diagnosis among the
abnormal births, or the true positive rate (TPR), is much higher than the naive
prediction of all normal births (where the TPR is by definition zero). For this
reason, the pair of TPR and false positive rate (FPR) are to be considered in tandem
in evaluating a given set of predictions:
{\begin{eqnarray*}
  \text{TPR} = \sum_{i=1}^n Y_i \hat Y_i/ \sum_{i=1}^n Y_i,\quad
  \text{FPR} = \sum_{i=1}^n \(1-Y_i\) \hat Y_i / \sum_{i=1}^n \(1-Y_i\).
\end{eqnarray*}}

In a statistical setting, a prediction method (based on either machine learning or a binary choice econometric model) typically generates a function of features that represents a sample estimate of the probability of the label taking value $1$
conditional on the features: $\hat p\(X_i\) \in \[0,1\]$.  The sample ROC
is a transformation of the estimated probability function $\hat p\(X_i\)$.
It is the collection of the set of all TPR/FPR pairs corresponding to decision rules
of the form of $\hat Y_i = \mathds{1}\(\hat p\(X_i\) > c\)$ when $c$ varies from $0$ to $1$. Essentially, ROC curves present the tradeoff between TPR and FPR for different cutoff thresholds. Figure \ref{figure 1} shows the typical shape of a ROC curve (the blue line),
e.g. resembling those reported in numerous science papers.
The red dot represents the aggregate TPR/FPR of human decision makers
(such as doctors diagnosing diseases) in the observed data that is typically
benchmarked by the ROC curves generated by machine learning algorithms.
\begin{figure}
  \begin{center}
    \caption{A typical ROC curve}\label{figure 1}
    \includegraphics[height=.32\textheight]{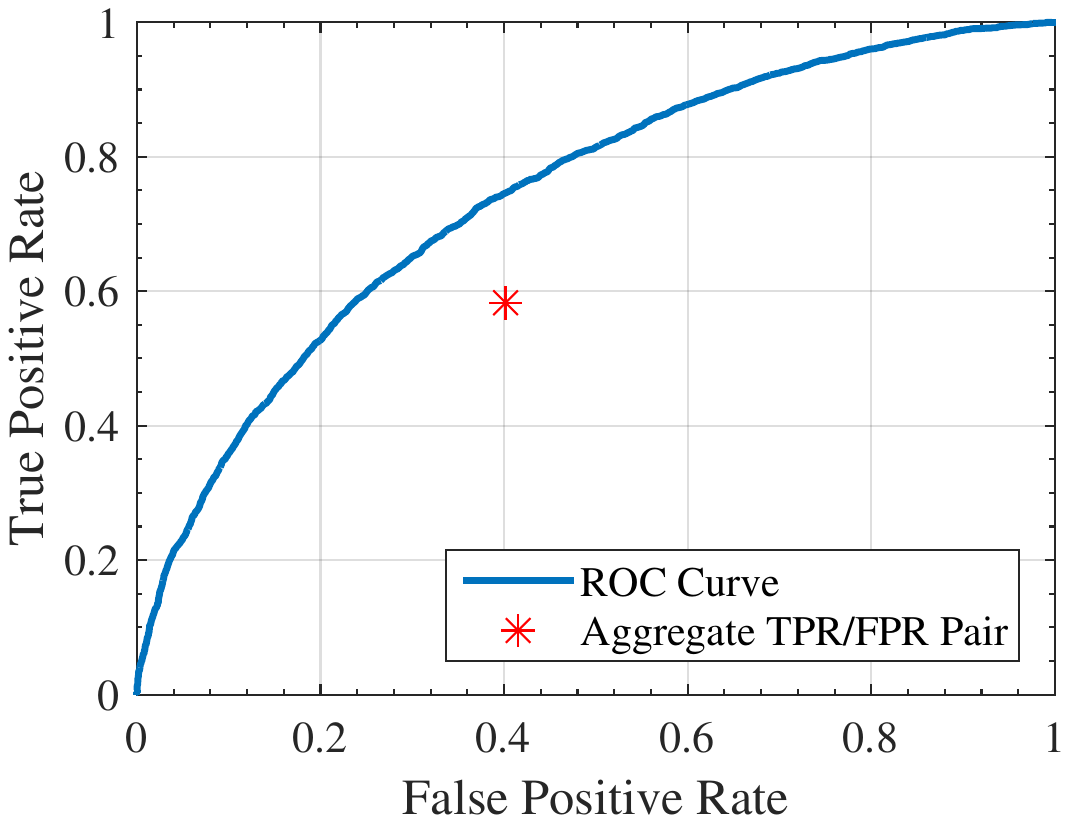}
  \end{center}
\end{figure}

Note that papers such as \cite{QJEbail} and \cite{elliott2013predicting} use
loss (cost) function to specify the tradeoff between under-detection and
over-detection rates, which is closely related to the ROC curve.

The focal point of our paper is to study the fundamental properties of ROC curves and proposes a statistical inference to the ROC curve. From \cite{bamber1975}, many papers such as \cite{fawcett1} and \cite{hand1}) study using the area under curve (AUC), a number instead of a curve, to measure the overall performance of a binary classifier. In this paper, we also present the inference of AUC and its implication for model selection.

Benchmarking the performance of machine algorithm, normally
presented by a ROC curve, with that of a human decision maker, presented by a
pair of false positive and true positive rates, is a common practice. For example, \cite{rajpurkar1} trained a 34-layer convolutional neural network to
process ECG sequences and compared its performance to 6 cardiologists. \cite{esteva1} proposed a deep convolutional neural network structure for skin cancer classification and claimed that the model outperforms the average dermatologist. \cite{kermany1} proposed an image-based deep learning model to
classify macular degeneration and concluded that it outperforms human.  The conclusions of these papers are mostly based on observing an empirical pair
of true positive and false positive rates that lie strictly below the ROC curve
formed by the machine classification algorithm, implying that machines can achieve a higher TPR for a given
FPR, or a lower FPR for a given TPR. However, an important message of our current paper is to \emph{caution against} such interpretations without a deeper
understanding of the human
decision making process: such findings can be rationalized not only by the
superior information quality of machine learning algorithms, but also by the \emph{incentive heterogeneity} of human decision makers who can be as
intelligent as machine learning algorithms in processing statistical information
from observational data.

To illustrate an issue of concern, consider
Figure \ref{figure
2}, in which a collection of human decision makers, denoted $j=1,\ldots,J$, all lie
approximately on the machine-learned ROC.
This is the case if they employed decision rules $\hat Y_i =
\mathds{1}\(\hat p\(X_i\) > c_i\)$ with the same $\hat p\(\cdot\)$ function but with
different individual cutoff points $c_i$.
Yet, after aggregating over all decision makers,
the aggregate TPR/FPR pair lies visibly below the ROC.
\begin{figure}
  \begin{center}
    \caption{Individual and Aggregate TPR/FPR Pairs: Perspective of Jensen's Inequality}\label{figure 2}
    \includegraphics[height=.32\textheight]{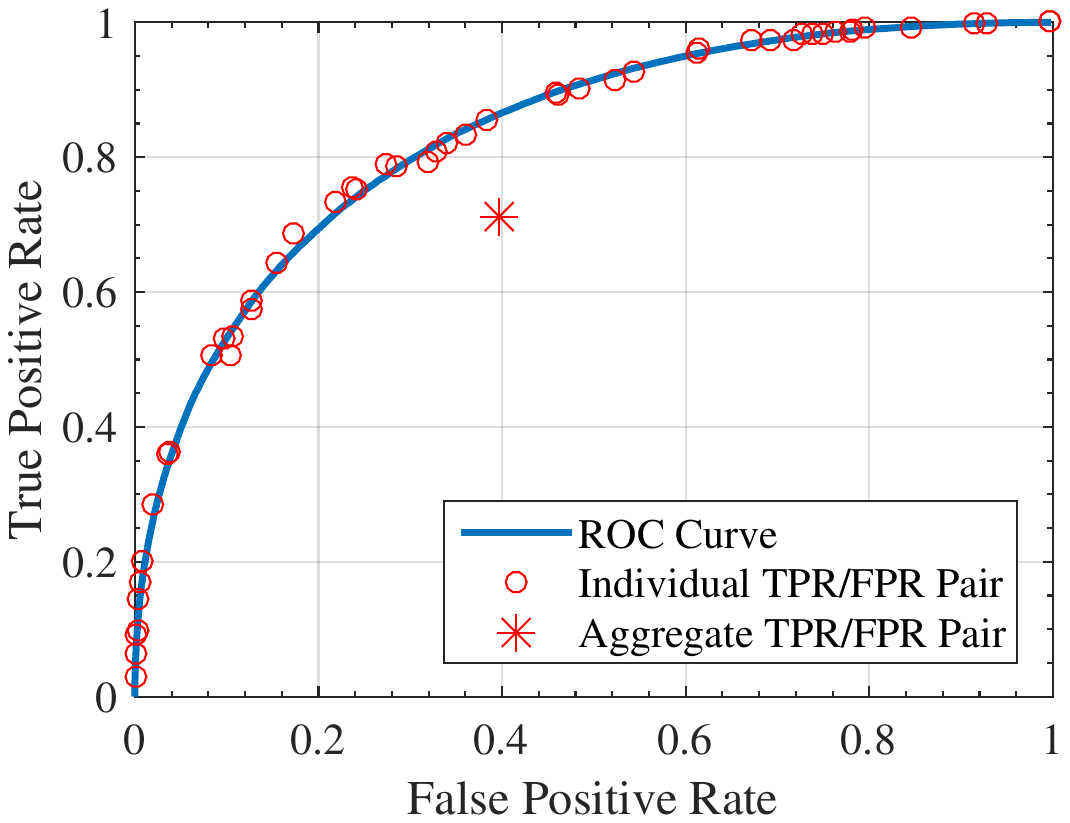}
  \end{center}
\end{figure}
This is an immediate artifact of Jensen's inequality due to the concavity of the
observed ROC, and bears no implication on the comparison between the qualities of the machine learning algorithm and human decision makers.
An optimal ROC is necessarily concave (Lemma \ref{concavity of roc lemma}).
This simple observation appears to have gone largely unnoticed by the
literature.

More precisely, as long as the collection of humans' individual TPR/FPR
points can be represented by a concave curve, the aggregated humans' TPR/FPR must
fall below the curve of humans' individual TPR/FPR
points.
Let $\alpha_j$ denote FPR and $\beta_j$ denote TPR,
and suppose they are related by $\beta = f\(\alpha\)$, where $f\(\cdot\)$ is
concave.  Then by Jensen's inequality:
 \begin{eqnarray*}
	 \bar \beta = \frac1J \sum_{j=1}^J \beta_j = \frac1J
	 \sum_{j=1}^n f\(\alpha_j\) < f\(\frac1J \sum_{j=1}^J
	 \alpha_i\) = f\(\bar \alpha\).
 \end{eqnarray*}

Furthermore, when incentive heterogeneity is present, a decision maker may also
set her cutoff value $c$ based on observed features $X$, denoted as
$c\(X_i\)$. As shown in subsection \ref{cutoff points}, in this case,
the TPR/FPR pair of a single decision maker is also below the optimal ROC curve.
Lemma \ref{ROCHetero} presents a formal proof. The foregoing discussion highlights the need to
understand and correctly interpret the statistical properties of ROC curves.

To motivate and empirically illustrate our theoretical discussion, we make use
of a data set of high-risk pregnancy diagnosis with more than a million
observations collected in the NFPC administered by the Chinese Ministry of Health. The checkup is offered free of
charge to newly-wed couples who are either expecting or who are planning to conceive.
The data set contains more than 300
features, including indicators from medical examinations and clinical tests,
individual and family history of diseases and drug usage, pregnancy history, and
demographic characteristics, etc. The label is whether the birth outcome is
normal or involves defects. Most importantly, the data set also contains the
diagnosis by doctors regarding the risk level of each pregnancy.

The rest of this paper are as follows. In section 2, we offer an in-depth analysis of the properties of the
ROC in the context of a statistical model and its relationship with loss
functions in decision making. Section 3 and 4 develop and study methods for the
statistical inference of ROC and for related model estimation and selection
issues. These two sections involve technical econometrics results that can be
skipped for readers mostly interested in economic insights.
Section 5 analyze the issues and caveats when comparing performance between humans and machine algorithms. In section 6, we present a detailed 
empirical application using the NFPC data set. Section 7 concludes.

\section{Neyman Pearson Lemma and Decision Rules}
In a standard statistical model, the data set
$Y_i,X_i,i=1,\ldots,n$ is typically assumed to be drawn i.i.d. (identically and
independently distributed) from an underlying population.
There is a true probability function
$p\(X_i\)= \mathbb{P}\(Y_i = 1 \vert X_i\)$, also known as the
propensity score function in the treatment effect literature. We will
invoke (uniform) law of large numbers and convergence in probability whenever
they may apply under plausible assumptions.
For this purpose, for $\mathbb X$ the support of $X_i$,
we assume that $\hat p\(X_i\)$ converges uniformly over $\mathbb X$ to a
deterministic limit function $q\(X_i\)$ when $n$
increases without bound:
\begin{eqnarray*}
\sup_{x \in \mathbb X} \left\vert \hat p\(x\) - q\(x\) \right\vert
	\overset{p}{\longrightarrow} 0, \quad\text{as}\quad n\rightarrow \infty.
\end{eqnarray*}
In the above, if the model that is used to estimate $\hat p\(X_i\)$ is correctly
specified, $q\(\cdot\) = p\(\cdot\)$. If a misspecified model is used to obtain
$\hat p\(X_i\)$, it is possible that $q\(\cdot\) \neq p\(\cdot\)$.

\subsection{Neyman Pearson Lemma}
Binary decision making is inherently related to parametric hypothesis
testing. In the terminology of hypothesis testing, $Y=0$ is often denoted as the null
hypothesis and $Y=1$ as the alternative hypothesis. The TPR/FPR pair corresponds
to the power (one minus Type II error) and the size (Type I error) of a test.
For a general classification rule $\hat Y_i = \mathds{1}\(X_i \in R\)$, where $R$ is
known as the rejection region in hypothesis testing, the law of large numbers
implies that
the sample TPR/FPR pair corresponding to each $R$
converges to their population analogs, denoted as PTPR and PFPR
{\begin{align}\begin{split}\nonumber
  \text{TPR} &\overset{\mathbb{P}}{\longrightarrow} \text{PTPR} \equiv \frac{ \mathbb{E}\[Y_i \mathds{1}\(X_i \in R\)\]}{p},\\
  \text{FPR} &\overset{\mathbb{P}}{\longrightarrow} \text{PFPR} \equiv \frac{ \mathbb{E}\[\(1-Y_i\) \mathds{1}\(X_i
\in R\)\]}{1-p}.
 \end{split}\end{align}}
In the above, $p = \mathbb{E}\[Y_i\] = \int p\(x\) f\(x\) \mathrm{d}x$ is the overall population portion of positive labels, 
where for simplicity we assume that $X$ has a density $f\(x\)$, which can be
broadly interpreted using generalized functions to include probability mass
functions for discrete $X$. 
By the law of iterated expectation,
\begin{align}
    \begin{split}\nonumber
    \text{PTPR} = \frac{1}{p}\int p\(x\) \mathds{1}\(x \in R\) f\(x\) \mathrm{d}x,\quad
    \text{PFPR} = \frac{1}{1 - p}
    \int \(1-p\(x\)\) \mathds{1}\(x \in R\) f\(x\) \mathrm{d}x
    .
    \end{split}
\end{align}
Recall the Bayes law
\begin{align}
    \begin{split}\nonumber
    	f\(X \vert Y=1\) = \frac{f\(X\) p\(X\)}{p},\quad
        f\(X \vert Y=0\) = \frac{f\(X\) \(1-p\(X\)\)}{1-p}.
    \end{split}
\end{align}
Therefore we can equivalently write
 {\begin{align}\begin{split}\nonumber
	\text{PTPR} = \int \mathds{1}\(X \in R\) f\(X \vert Y=1\) \mathrm{d}X,\quad
	\text{PFPR} = \int \mathds{1}\(X \in R\) f\(X \vert Y=0\) \mathrm{d}X.
 \end{split}\end{align}}
Consequently, PTPR is the probability of rejection under the
alternative hypothesis, namely the power of the test; PFPR is the probability of
rejection under the null hypothesis, namely the size of the test.  In the
population limit, the ROC is therefore a plot of power against size, for the collection of rejection areas
determined by $R = \mathds{1}\(q\(X_i\) > c\)$ when $c$ ranges over $\[0,1\]$.

The classical Neyman Pearson Lemma states that the collection of likelihood ratio tests
\begin{align}\begin{split}\nonumber
R_{NP}\(d\) = \biggl\{x:  	 \frac{f\(X \vert Y=1\)}{f\(X \vert Y=0\)}  >
	 d\biggr\},
\end{split}\end{align}
where $d \in \(0, \infty\)$ varies, are {\it most powerful tests} that
maximize power for whatever size it achieves. By the Bayes law,
write
\begin{align}
    \begin{split}\nonumber
    R_{NP}\(d\) = \biggl\{x:  	 \frac{p\(x\)}{1 - p\(x\)} >  d
    \frac{p}{1-p}\biggr\}
    = \biggl\{x:  	 p\(x\) >  c =  \frac{dp}{1-p + dp}\biggr\}.
     \end{split}
\end{align}
Consequently, the ROC corresponding to the decision rules
\begin{align}
    \begin{split}\nonumber
    	 \hat y = \mathds{1}\(p\(x\) > c\)\quad\text{for $c$ varying between $0$ and
    	 $1$,}
     \end{split}
\end{align}
has the Neyman-Pearson optimality that it lies weakly above the PTPR/PFPR pair
of any other decision rule $\hat y = \mathds{1}\(x \in R\)$ for any given $R$, or equivalently the
ROC of any alternative collection of decision rules.

Lemma \ref{concavity of roc lemma} in the Appendix shows that the optimal ROC is
necessarily a concave function. A non-optimal ROC, on the other hand, need not
be concave, and can even lie above the 45 degree line.

\subsection{Loss Function in Decision Making}\label{lossfunction}
The decision theoretic framework in \cite{elliott2013predicting} and \cite{QJEbail} balances the loss of false positive decision vs. the loss of false negative decision through a cost (loss) function. Since the choice decision depends only on
the utility difference and is invariant with respect to the addition of a normalizing constant,
without loss of generality we first consider
minimizing expected cost based on the following cost matrix, where the cost of correct classifications
is normalized to zero:
\begin{table}
  \caption{Loss Matrix} \label{loss matrix}
    \begin{center}
    \begin{tabular}{ l|cc }
      \toprule
      & $\hat Y=0, \text{Accept}$ & $\hat Y=1, \text{Reject}$ \\
      \midrule
      $Y=0$ & $0$ & $c_{0R}$ \\
      $Y=1$ & $c_{1A}$  & $0$\\
      \bottomrule
    \end{tabular}
  \end{center}
\end{table}
The (negative of the) expected loss governed by this cost matrix is then
\begin{align}
    \begin{split}\label{C0C1}
    &-\int \[C_{0R} \mathds{1}\(x\in R, Y=0\) + C_{1A} \mathds{1}\(x\in R^c,
    Y=1\)\] \mathrm{d}F\(x,Y\)
    \\
    =& -\int \biggl[\mathds{1}\(x\in R\) \(1-p\) C_{0R} f\(x \vert
     Y=0\) + \mathds{1}\(x\in R^c\) p C_{1A} f\(x \vert Y=1\)
     \biggr] \mathrm{d}x
    \\
    =& -C_{0R}  \mathbb{P}\(X\in R, Y=0\) + C_{1A} \mathbb{P}\(X \in R, Y=1\) + \text{const}
    \end{split}
\end{align}
where $\text{const}=-C_{1A} \mathbb{P}\(Y=1\)$ does not depend on the rejection region
$R$.  This takes the same form of a linear combination of PTPR and PFPR, i.e. $\phi \mathrm{PTPR} - \eta \mathrm{PFPR}$, where $\phi= p C_{1A}$ and $\eta=\(1-p\) C_{0R}$.
{\begin{align}
  \begin{split}
    \phi \text{PTPR} - \eta \text{PFPR}=& \frac{\phi}{p} \mathbb{P}\(Y=1, X \in R\) - \frac{\eta}{1-p}  \mathbb{P}\(Y=0, X \in R\) \\
    =& \int \[\mathds{1}\(X \in R\)
    \frac{\phi}{p}
    p\(X\)
    -
    \mathds{1}\(X \in R\)
    \frac{\eta}{1-p}
    \(1-p\(X\)\)\] f\(X\) \mathrm{d}X
  \end{split}
\end{align}}
By inspection, the optimal rule $R_c$ that maximizes this linear combination is given by
 {\begin{align}\begin{split}\label{c determined by eta and phi}
 R_c &= \biggl\{X:
\frac{\phi}{p}
 p\(X\) > \frac{\eta}{1-p}
\(1-p\(X\)\)\
\biggr\}
	 =\biggl\{X:
p\(X\) > c = \frac{\eta / (1-p)}{\phi/p + \eta / (1-p)}
\biggr\}.
 \end{split}\end{align}}
Therefore, for each $c$, the implied $R_c$ corresponds to the maximizer of
$\phi \text{PTPR} - \eta \text{PFPR}$ for some (nonunique) choice of
$\phi$ and $\eta$ which in turn determine $c$.
Consequently, for this $c$ and the resulting $R_c$, it is not possible to
choose an alternative $R$ in order to increase PTPR while keeping PFPR
unchanged, or to decrease PFPR while keeping PTPR unchanged.   In other words,
other classification rules that result in at least as much as this
PTPR will necessarily have the same or higher PFPR.
Equivalently, any other classification rule that results in at most
this PFPR will have the same or lower PTPR. Since the choice of $C_{0R}$ and $C_{1A}$ is subjective and lacks consensus, it is common to draw a curve of FPR and TPR according to different combination of $C_{0R}$ and $C_{1A}$ in an ROC curve.

\subsection{Additional Remarks}

\paragraph{Randomized tests} are sometimes used to obtain a desired size when the
observations $X$ have a discrete distribution, and has its analog in binary
classification as a randomized classification rule.

However, for size levels that
can be achieved by a function of the features,
randomization only serves to dilute power.
Specifically, a randomized classifier is a decision rule
 {\begin{align}\begin{split}\nonumber
	\hat Y = \phi\(X, U\) \in \{0,1\}\quad\text{were}\quad U \perp Y \vert X.
 \end{split}\end{align}}
The condition $U \perp Y \vert X$ rules out information about $Y$ that is not
already contained in $X$.
		
The resulting PFPR and PTPR can now be written as
\begin{align}\begin{split}\nonumber
\text{PTPR} =
	\frac{1}{p}\iint \phi\(X,U\) f\(U \vert X\) \mathrm{d}U p\(X\) f\(X\) \mathrm{d}X
\end{split}\end{align}
and
\begin{align}\begin{split}\nonumber
\text{PFPR} =
	\frac{1}{1 - p}\iint \phi\(X,U\) f\(U \vert X\) \mathrm{d}U \(1-p\(X\)\) f\(X\) \mathrm{d}X
 \end{split}\end{align}
Given $\phi$ and $\eta$, the maximizing $\phi\(X,U\)$ of a linear combination
of $\phi\text{PTPR} - \eta\text{PFPR}$ satisfies
\begin{align}
    \begin{split}\nonumber
    \int \phi\(X,U\) f\(U \vert X\) \mathrm{d}U  = \mathds{1}\(p\(X\) > c\),
    \end{split}
\end{align}
for $c$ determined by $\phi$ and $\eta$ as in \eqref{c determined by eta and
phi}. A solution is given by \eqref{c determined by eta and
phi}: $\phi\(X,U\) = \mathds{1}\(p\(X\) > c\)$.


\paragraph{Alternative representations} of the ROC are possible.
Consider the following question: conditional on a machine diagnosis of
being high risk, what is the probability
of actually being high risk. In other words, we would like to calculate, by
Bayes law:
\begin{align}\begin{split}\nonumber
	&\mathbb{P}\(Y = 1\vert \hat Y=1\) = \frac{\mathbb{P}\(Y=1, \hat Y=1\)}{\mathbb{P}\(\hat Y = 1\)}\\
  =& \frac{
  p \times \mathbb{P}\(\hat Y = 1\vert Y=1\)
  }{
  p \times \mathbb{P}\(\hat Y = 1\vert Y=1\) +
	(1-p) \times \mathbb{P}\(\hat Y = 1\vert Y=0\)
  } = \frac{
  p  \times \text{TPR}
  }{
	  p  \times \text{TPR} + (1-p) \times  \text{FPR}
  }
\end{split}\end{align}
In the above, $p$ is given by raw data. Next, each point on the ROC curve
corresponds to a $\text{TPR} / \text{FPR}$ pair. Therefore, for each point on the
ROC curve, we can calculate $\mathbb{P}\(Y = 1\vert \hat Y=1\)$. Connecting these numbers
will produce a ``posterior odds'' curve.

For example, with perfect classification, $\text{FPR}=0$ and $\text{TPR}=1$,
$\mathbb{P}\(Y = 1\vert \hat Y=1\) = 1$. With random guessing, the ROC curve is the 45
degree line, where $\text{FPR}=\text{TPR}$, then
$\mathbb{P}\(Y = 1\vert \hat Y=1\)=\mathbb{P}\(Y=1\)$, same as the raw sample unconditional
probability.

\paragraph{Sampling Errors} differentiates the ROC for the binary
classification problem from classical hypothesis testing problems.  In the size versus
power tradeoff in classical hypothesis testing, the conditional densities of the
features under both the null and the alternative, $f\(X \vert Y=0\)$ and
$f\(X \vert Y=1\)$, are known and fully specified. In contrast, in binary
classification these quantities, or equivalently the propensity score $p\(x\)$, need
to be estimated from the training data set and are thus subject to sampling errors.

In the population analysis we abstract away from such sampling errors, a topic that
we defer to in the estimation and inference sections. Precise knowledge of the correctly specified population
propensity score $p\(X\)$ is not feasible in finite samples. If the
features $X_i$ are supported on a small number of discrete points, $p\(x\)$ can
be estimated by the sample
frequency counts of $Y$ for each value in the support of $\mathbb X$:
 {\begin{align}\begin{split}\nonumber
	 \hat p\(X=k\) = \frac{\sum_{i=1}^n Y_i \mathds{1}\(X_i = k\)}{n_k},\quad n_k =
	 \sum_{i=1}^n \mathds{1}\(X_i = k\).
 \end{split}\end{align}}
When the features are continuously distributed or take a large number of
discrete values, regularization methods such as parametric assumptions, sampling
splitting, nonparametric regression, penalization, etc, are needed to reduce overfitting and to obtain $\hat p\(x\)$ that has out of
sample predictive powers.

The $\hat p\(x\)$ estimated from the training data is likely to be
misspecified in the holdout sample, or converge to a $q\(x\)$ that is
misspecified in the population.
In these situations, it may be difficult to find a
dominating ROC among several competing models, as their corresponding ROCs are
likely to cross each other.  Choosing among these ROCs involves a
subjective choice of criteria,
such as the F-score or the Area Under Curve (AUC). The sense in
which these alternative loss functions offer better model selection
criteria than the usual measurements (such as mean square error,
accuracy, or entropy divergence) remains to be investigated. \cite{huang1}
argued both theoretically and empirically for the advantage of using
AUC over accuracy.

\paragraph{The duality} between binary classification and classical hypothesis testing also suggests
a conditional version of the  Neyman-Pearson lemma that uses the
conditional density of $x_1$ given $x_2$: $f\(x_1 \vert x_2,
y=\{1,0\}\)$ and that employs a critical value
$c\(x_2\)$ that depends on $x_2$ only:
{\begin{align}\begin{split}\label{Rtilde}
\tilde R = \biggl\{x:  \frac{f\(x_1 \vert x_2,
		 y=1\)}{f\(x_1 \vert x_2, y=0\)}  > c\(x_2\)\biggr\}
\end{split}\end{align}
}
where $\tilde R$ is the collection of rejection areas.
This can be justified by minimizing expected conditional Bayesian loss
	given $x_2$, where the {\it prior} and the loss can depend on $x_2$
		 (only):
{\begin{align}\begin{split}\nonumber
	\tilde R =& \arg\min_R p_1\(x_2\)\ell_1\(x_2\)
	\int_{x_1 \in R} f\(x_1 \vert x_2, y=0\) \mathrm{d}x_1\\
	&\hspace{.1in}+
	\(1-p_1\(x_2\)\)\ell_0\(x_2\)
	\int_{x_1 \in R^c} f\(x_1 \vert x_2, y=1\) \mathrm{d}x_1.
\end{split}\end{align}
}
In the above, $\ell_0\(x_2\)$ is the loss of rejecting when the null is
true, and $\ell_1\(x_2\)$ is the loss of not rejecting when the null is
false. $p_1\(x_2\)$ is the {\it conditional} prior probability
of the null, and $\(1-p_1\(x_2\)\)$ is the {\it conditional} prior probability
of the alternative.\footnote{
Note that this is different from an unconditional
 Bayesian decision maker whose loss function depends only on
 $c\(x_2\)$,  and who makes use of the unconditional joint
 likelihood function of $x_1$ and $x_2$:
{
\begin{align}\begin{split}\nonumber
	\tilde R = \biggl\{x:  \frac{f\(x_1,  x_2 \vert
		 y=1\)}{f\(x_1, x_2 \vert y=0\)}  =
		\frac{f\(x_1 \vert x_2,
		 y=1\)}{f\(x_1 \vert x_2,
		 y=0\)} \frac{f\(x_2 \vert
		 y=1\)}{f\(x_2 \vert y=0\)}
		 > c\(x_2\)\biggl\}.
\end{split}\end{align}
}
}

While the rejection region now also takes a form of ({\ref{Rtilde}}), the determination of $c\(x_2\)$ also involves knowledge of $\frac{f\(x_2 \vert y=1\)}{f\(x_2 \vert y=0\)}$ in addition to the prior probabilities and loss function matrix.

\section{Statistical Inference of ROC Curves}
In this section we derive asymptotic pointwise confidence bands for an estimated
ROC to account for its sampling uncertainty.
%
Confidence bands can be useful for testing the statistical performance of a ROC
curve or testing whether one ROC lies above another.  These results are
technical in nature and are valid under conventional regularity conditions. For
brevity and clarify we present the main results. Detailed verification steps are
available from the authors upon request.
In reference of Figure \ref{figure 1}, confidence bands can be constructed vertically or horizontally. We begin with
the vertical construction, and then show that it is also valid in a horizontal
sense.

\subsection{Vertical Construction of Confidence Bands}

For tractability, we consider parametric models of $p\(X_i, \theta\)$ under
i.i.d sampling assumptions. A consistent estimate of $\hat\theta$ can be obtained by
maximum likelihood or other methods, such that
$\hat\theta\overset{p}{\longrightarrow} \theta_0$,  with an asymptotic linear
influence function representation:
\begin{align}\begin{split}\label{influence function for theta}
\sqrt{n}\(\hat\theta-\theta_0\) = \frac{1}{\sqrt{n}} \sum_{i=1}^n \kappa_i +
o_P\(1\),\quad\text{where}\quad  \kappa_i=\kappa\(y_i,x_i\)
\end{split}\end{align}
In the sample, the power and
size corresponding to the
classification rule based on $p\(X_i, \theta\)$ and a threshold value of $c$ are
given by
\begin{align}\begin{split}\nonumber
\hat\beta\(c\)=
\frac{
1/n \sum_{i=1}^n y_i \mathds{1}\(p\(x_i, \hat\theta\) > c\)
}{
\hat p
},\quad
\hat\alpha\(c\) =
\frac{
1/n \sum_{i=1}^n \(1-y_i\) \mathds{1}\(p\(x_i, \hat\theta\) > c\)
}{
1 - \hat p
}
\end{split}\end{align}
To simplify notation let $\hat c_\alpha = \hat\alpha^{-1}\(\alpha\)$ and
$\hat\beta_\alpha = \hat\beta\(\hat c_\alpha\)
= \hat\beta\(\hat\alpha^{-1}\(\alpha\)\)$. Non-continuity can be handled by
redefinining $\hat c_\alpha  = \inf\{
c: \hat\alpha\(c\) > \alpha \}$. The population analogs of the power and size
curves are defined by
\begin{align}\begin{split}\nonumber
\beta\(c\) = \frac{1}{p} \mathbb{E} \[ p\(X\) \mathds{1}\(p\(X, \theta_0\) > c\)\],
\quad
\alpha\(c\) = \frac{1}{1-p} \mathbb{E} \[\(1-p\(X\)\) \mathds{1}\(p\(X, \theta_0\) > c\)\].
\end{split}\end{align}
Similarly let $c_\alpha = \alpha^{-1}\(\alpha\)$ and $\beta_\alpha \equiv
\beta\(\alpha\) = \beta\(c_\alpha\)
= \beta\(\alpha^{-1}\(\alpha\)\)$.

The goal is to construct an asymptotic confidence inteval for $\beta_\alpha$
based on  $\hat \beta_\alpha$ for each $\alpha$, in the form of $\(\hat\beta_\alpha - \hat d,
\hat\beta_\alpha + \hat d\)$, to ensure a given approximate coverage probability
\begin{align}\begin{split}\label{horizontal confidence interval}
\lim\inf_{n\rightarrow\infty} \mathbb{P}\(
\hat\beta_\alpha - \hat d \leq \beta_\alpha \leq
\hat\beta_\alpha + \hat d \) \geq 1-\eta.
\end{split}\end{align}
Typical choices of $\eta$ are $1\%, 5\%$ and $10\%$. This is achieved by
deriving the asymptotic distribution of $\hat\beta_\alpha-\beta_\alpha$, which
in turn can be based on an influential function representation in the form of
\begin{align}\begin{split} \label{influence function for hat beta_alpha}
\sqrt{n} \(\hat\beta_\alpha - \beta_\alpha\) = \frac{1}{\sqrt{n}} \sum_{i=1}^n
\psi_i + o_{\mathbb{P}}\(1\),\quad\text{where}\quad
\psi_i=\psi\(y_i, x_i, \alpha\).
\end{split}\end{align}
It follows from \eqref{influence function for hat beta_alpha} that
\begin{align}\begin{split}\nonumber
\sqrt{n} \(\hat\beta_\alpha - \beta_\alpha\) \overset{d}{\longrightarrow}
N\(0,\sigma^2\),\quad\text{where}\quad \sigma^2= Var\(\psi_i\).
\end{split}\end{align}
For a consistent estimate $\hat\sigma^2 \overset{\mathbb{P}}{\rightarrow} \sigma^2$, we
can form $\hat d = \frac{1}{\sqrt{n}} \hat\sigma \Phi^{-1}\(1-\eta/2\)$.
In the following we present the procedure to derive \eqref{influence function for hat beta_alpha}
and to obtain $\hat\sigma^2$. Verifying \eqref{influence function for hat
beta_alpha} is also important for validating the use of resampling methods such
as the bootstrap for confidence interval construction.

To show \eqref{influence function for hat beta_alpha} we begin with
defining the sample moment conditions,
\begin{align}\begin{split}\nonumber
Q_n\(\theta,c,\beta\)
=& \frac1n \sum_{i=1}^n y_i \(\mathds{1}\(p\(x_i, \theta\) > c\) - \beta\),\\
P_n\(\theta,c,\alpha\)
=& \frac1n \sum_{i=1}^n \(1-y_i\) \(\mathds{1}\(p\(x_i, \theta\) > c\) - \alpha\).
\end{split}\end{align}
and their the population analogs
\begin{align}\begin{split}\nonumber
Q\(\theta,c,\beta\)
=& \mathbb{E}\[p\(X\) \(\mathds{1}\(p\(X, \theta\) > c\) - \beta\)\],\quad \\
P\(\theta,c,\alpha\)
=& \mathbb{E}\[\(1-p\(X\)\) \(\mathds{1}\(p\(X, \theta\) > c\) - \alpha\)\].
\end{split}\end{align}
For a correctly specified model, $p\(X\) = p\(X, \theta_0\)$.
By construction
\begin{align}\begin{split}\nonumber
Q_n\(\hat\theta,\hat c_\alpha,\hat\beta_\alpha\) = o_{\mathbb{P}}\(\frac1n\), \quad
P_n\(\hat\theta,\hat c_\alpha, \alpha\) = o_{\mathbb{P}}\(\frac1n\), \quad
Q\(\theta_0,c_\alpha,\beta_\alpha\) = 0, \quad
P\(\theta_0,c_\alpha, \alpha\) = 0.
\end{split}\end{align}
To account for the discontinuity of the sample moment conditions as a function
of the parameters, we assume that the parametric propensity score
$p\(X_i,\theta\)$ satisfies a typical
stochastic equicontinuity condition (Chapters 36 and 37 of \cite{newey_mcfadden}):
\begin{align}\begin{split}\nonumber
 P_n\(\hat\theta,\hat c_\alpha, \alpha\)
- P_n\(\theta_0, c_\alpha, \alpha\)
- P\(\hat\theta,\hat c_\alpha, \alpha\)
+ P\(\theta_0,c_\alpha, \alpha\) = o_{\mathbb{P}}\(\frac{1}{\sqrt{n}}\).
\end{split}\end{align}
The first term is $o_{\mathbb{P}}\(\frac1n\)$. Therefore
\begin{align}\begin{split}\nonumber
P_n\(\theta_0, c_\alpha, \alpha\)
+ P\(\hat\theta,\hat c_\alpha, \alpha\)
- P\(\theta_0,c_\alpha, \alpha\)
= o_{\mathbb{P}}\(\frac{1}{\sqrt{n}}\).
\end{split}\end{align}
Then by first order Taylor expansion,
\begin{align}\begin{split}\nonumber
\sqrt{n}\(\hat c_\alpha - c_\alpha\)
= -\(
\frac{\partial P\(\theta_0,c_\alpha,\alpha\)}{\partial c}
\)^{-1}
\[\sqrt{n}
P_n\(\theta_0,c_\alpha,\alpha\)
+
\frac{\partial P\(\theta_0,c_\alpha,\alpha\)}{\partial \theta}
\sqrt{n}\(\hat\theta-\theta_0\)
\] + o_{\mathbb{P}}\(1\),
\end{split}\end{align}
Under correct model specification, for $f_p\(\cdot\)$ the implied density of
$p\(X\)$ induced by $X$,
\begin{align}\begin{split}\nonumber
P_c\(\theta_0,c_\alpha,\alpha\)\equiv
\frac{\partial P\(\theta_0,c_\alpha,\alpha\)}{\partial c}
= - \(1 - c_\alpha\) f_p\(c_\alpha\).
\end{split}\end{align}
Without a specific function form for $p\(X,\theta\)$ we cannot provide a further analytic expression for
$P_\theta\(\theta_0,c_\alpha,\alpha\)=
\frac{\partial P\(\theta_0,c_\alpha,\alpha\)}{\partial \theta}$,
but it can be estimated consistently using finite sample numerical derivatives.

To derive \eqref{influence function for hat beta_alpha}
we continue to make use of stochastic equicontinuity
\begin{align}\begin{split}\nonumber
Q_n\(\hat\theta,\hat c_\alpha, \hat\beta_\alpha\)
- Q_n\(\theta_0, c_\alpha, \beta_\alpha\)
- Q\(\hat\theta,\hat c_\alpha, \hat\beta_\alpha\)
+ Q\(\theta_0,c_\alpha, \beta_\alpha\) = o_{\mathbb{P}}\(\frac{1}{\sqrt{n}}\).
\end{split}\end{align}
The first term is $o_{\mathbb P}\(\frac1n\)$, so that we can first order Taylor expand on
\begin{align}\begin{split}\nonumber
Q_n\(\theta_0, c_\alpha, \beta_\alpha\)
+ Q\(\hat\theta,\hat c_\alpha, \hat\beta_\alpha\)
- Q\(\theta_0,c_\alpha, \beta_\alpha\) = o_{\mathbb{P}}\(\frac{1}{\sqrt{n}}\),
\end{split}\end{align}
to conclude that,
\begin{align}\begin{split}\nonumber
\sqrt{n}\(\hat \beta_\alpha - \beta_\alpha\)
=& -\(
Q_\beta\(\theta_0,c_\alpha,\beta_\alpha\)
\)^{-1}
\biggl[\sqrt{n}
Q_n\(\theta_0,c_\alpha,\beta_\alpha\)
+
Q_\theta\(\theta_0,c_\alpha,\beta_\alpha\)
\sqrt{n}\(\hat\theta-\theta_0\)\\
&+
Q_c\(\theta_0,c_\alpha,\beta_\alpha\)
\sqrt{n}\(\hat c_\alpha-c_\alpha\)
\biggr] + o_{\mathbb{P}}\(1\).
\end{split}\end{align}
In the above
\begin{align}\begin{split}\nonumber
Q_\beta\(\theta_0,c_\alpha,\beta_\alpha\)=&\frac{\partial
Q\(\theta_0,c_\alpha,\beta_\alpha\)}{\partial \beta},\
Q_\beta\(\theta_0,c_\alpha,\beta_\alpha\)=\frac{\partial
Q\(\theta_0,c_\alpha,\beta_\alpha\)}{\partial \beta},\ \text{and}\\
Q_c\(\theta_0,c_\alpha,\alpha\)\equiv& \frac{\partial Q\(\theta_0,c_\alpha,\alpha\)}{\partial c}.
\end{split}\end{align}

When the model is correctly specified, such that $p\(x\) =
p\(x,\theta_0\)$, $\frac{\partial Q\(\theta_0,c_\alpha,\beta_\alpha\)}{\partial c}
= - c_\alpha f_p\(c_\alpha\)$. This can be combined with the representation for
$\sqrt{n}\(\hat c_\alpha - c_\alpha\)$ to write
\begin{align}\begin{split}\nonumber
\sqrt{n}\(\hat \beta_\alpha - \beta_\alpha\)
=& Q_\beta\(\theta_0,c_\alpha,\beta_\alpha\)^{-1}
\biggl[
\sqrt{n}
Q_n\(\theta_0,c_\alpha,\beta_\alpha\)
-\frac{c_\alpha}{1-c_\alpha}\sqrt{n}
P_n\(\theta_0,c_\alpha,\alpha\) \\
+&
\(Q_\theta\(\theta_0,c_\alpha,\beta_\alpha\)
-\frac{c_\alpha}{1-c_\alpha}
P_\theta\(\theta_0,c_\alpha,\alpha\)
\)
\sqrt{n}\(\hat\theta-\theta_0\)
\biggr]
+ o_{\mathbb{P}}\(1\).
\end{split}\end{align}
Noting that $\frac{\partial Q\(\theta_0,c_\alpha,\beta_\alpha\)}{\partial
\beta}=-p$, in combination with \eqref{influence function for theta}, we simplify to obtain
\eqref{influence function for hat beta_alpha}:
\begin{align}\begin{split}
\label{ROCkey}
\psi_i
=& \frac{1}{p}
\biggl[
y_i \(\mathds{1}\(p\(x_i, \theta\) > c\) - \beta\)
-\frac{c_\alpha}{1-c_\alpha}
\(1-y_i\) \(\mathds{1}\(p\(x_i, \theta\) > c\) - \alpha\)
\\
+&
\(Q_\theta\(\theta_0,c_\alpha,\beta_\alpha\)
-\frac{c_\alpha}{1-c_\alpha}
P_\theta\(\theta_0,c_\alpha,\alpha\)
\)
\kappa_i
\biggr] 
\end{split}\end{align}
We have therefore
obtained \eqref{influence function for hat beta_alpha}.
A consistent estimate $\hat\psi_i$ of the influence function $\psi_i$ can be obtained by
replacing unknown parameters and population quantities with sample analogs and
numerical derivatives, and subsequentially be used to form $\hat\sigma^2 = \frac1n \sum_{i=1}^n
\hat\psi_i^2$.

The pointwise convergence for each $\alpha$ can be strengthened to obtain uniform confidence bands over compact sets.
Under suitable regularity conditions, it can be shown that for  $a > 0$ and $b <
1$, $\hat\beta_\alpha - \beta_\alpha$ converges weakly and uniformly in  $\alpha \in
\[a,b\]$, implying that
\begin{align}\begin{split}\nonumber
\sup_{\alpha \in \[a,b\]} \sqrt{n} \lVert \hat\beta_\alpha - \beta_\alpha \rVert
\rightsquigarrow
\sup_{\alpha \in \[a,b\]} \lVert \mathbb{G}\(\alpha\)\rVert
\end{split}\end{align}
where $\mathbb{G}\(\cdot\)$ is a Gaussian process with covariance process
$Cov\(\psi\(y_i, x_i, \alpha\), \psi\(y_i, x_i, \alpha'\)\)$.
Convergence to the Gaussian limit also justifies the use of bootstrapping to form both pointwise and uniform confidence bands.
We defer a formal development of these results to a future study. Notions of
uniform weak convergence and the validity of bootstrap can be found in
\cite{kosorok2007introduction}.

The asymptotic linear
representation results hold regardless of whether $\hat\theta$ is
obtained using the same sample to compute the ROC, or is
estimated using a pre-training sample.
In the former case, $Q_n\(\theta_0,c_\alpha,\beta_\alpha\)$,
$P_n\(\theta_0,c_\alpha,\alpha\)$ are necessarily correlated with
$\sqrt{n}\(\hat \beta_\alpha - \beta_\alpha\)$.
In the later case, they are independent of each other under i.i.d
sampling. The results in this section also hold regardless of whether $p\(X_i, \theta\)$ is
correctly specified.

\subsection{Horizontal Construction of Confidence Bands}

The previous subsection constructs (pointwise) confidence band {\it vertically}
by defining $\hat\beta_l\(\alpha\) = \hat \beta_\alpha - \hat d$ and
$\hat\beta_u\(\alpha\) = \hat \beta_\alpha + \hat d$, such that for all
$\alpha \in \(0,1\)$, \eqref{horizontal confidence interval} holds. We now argue
that  \eqref{horizontal confidence interval} is also valid in a {\it horizontal}
sense. For this purpose, for each $\beta$ define $\alpha\(\beta\) = \beta^{-1}\(\beta\)$, and
define
$\hat\alpha_l\(\beta\)$
and $\hat\alpha_u\(\beta\)$ through the relation:
\begin{align}\begin{split}\nonumber
\hat\beta_u\(\hat\alpha_l\(\beta\)\) = \beta, \quad
\hat\beta_l\(\hat\alpha_u\(\beta\)\) = \beta.
\end{split}\end{align}
Then \eqref{horizontal confidence interval} is also horizontally valid, in the
sense that
\begin{align}\begin{split}
\label{pointwise CI vertical}
\forall \beta \in \(0,1\),\quad
\mathbb{P}\(\hat\alpha_l\(\beta\) < \alpha\(\beta\)
< \hat\alpha_u\(\beta\) \) \rightarrow 1 -\eta.
\end{split}\end{align}
For this purpose, it suffices to note from using monotonicity that
\begin{align}\begin{split}\nonumber
	\hat\alpha_l\(\beta\) < \alpha\(\beta\)
	\Longleftrightarrow \hat\beta_u\(\alpha\(\beta\)\) > \beta,\quad \hat\alpha_u\(\beta\) > \alpha\(\beta\)
	\Longleftrightarrow \hat\beta_l\(\alpha\(\beta\)\) < \beta.
\end{split}\end{align}
The following relations are therefore equivalent,
\begin{align}\begin{split}\nonumber
\hat\alpha_l\(\beta\) < \alpha\(\beta\)
	< \hat\alpha_u\(\beta\) \Longleftrightarrow
\hat\beta_l\(\alpha\(\beta\)\) < \beta < \hat\beta_u\(\alpha\(\beta\)\)
\Longleftrightarrow
\hat\beta_l\(\alpha\) < \beta\(\alpha\) < \hat\beta_u\(\alpha\).
\end{split}\end{align}
Consequently \eqref{horizontal confidence interval} and \eqref{pointwise CI
vertical} are equivalent statements.
Likewise, a horizontally constructed confidence interval is also vertically
valid.

\subsection{Simulation Results}\label{simulation1}

The asymptotic distribution derived above can be estimated analytically by
sample analogs or by resampling methods such as the bootstrap. We present a
small simulation exercise to illustrate their difference.

The data generating process is specified to be a logit model,
\begin{align}\begin{split}\nonumber
p\(X\) = \exp\(X'\beta\) / \(1 + \exp\(X'\beta\)\)
\end{split}\end{align}
where $X=\(X_1,X_2\)$, $\beta=\(1, -0.5\)$,
$X_{1}\sim N(2, 1)$, $X_{2} \sim N(0, 1)$, $B \sim Uniform(0, 1)$ and $Y = 1(p(X_{1}, X_{2}) > B)$. Note
that $X_{1}$ and $X_{2}$ are independent. In the simulation,
20,000 observations are randomly generated. We divide the data set into training set
and test set with a $1 : 1$ ratio. The training set is used to estimate the
parameters of the logit model, where the constant term is fixed to 0. Using the parameters
of the estimated model, we fit the test set to obtain its ROC curve. We then calculate the theoretical
values of the confidence bands of the ROC curve shown in (\ref{ROCkey})
analytically using sample analogs.

We also obtain the confidence bands for obtained ROC curve numerically by bootstrap. We divide each
bootstrap sample into two halves, one for training and one for test. After estimating the logit
model based on the training set, we draw ROC curves based on the test set and estimated model parameters.
We bootstrap the whole sample 1,000 times and draw the 95\% confidence bands based on the bootstrapped
ROC curves. Figure \ref{SimulatedROCBand} shows that the theoretical and bootstrapped values
of ROC confidence bands are closed matched to each other.
 \begin{figure}
\begin{center}
\caption{Theoretical and Bootstrapped Confidence Bands of the ROC Curve}\label{SimulatedROCBand}
    \includegraphics[height=.32\textheight]{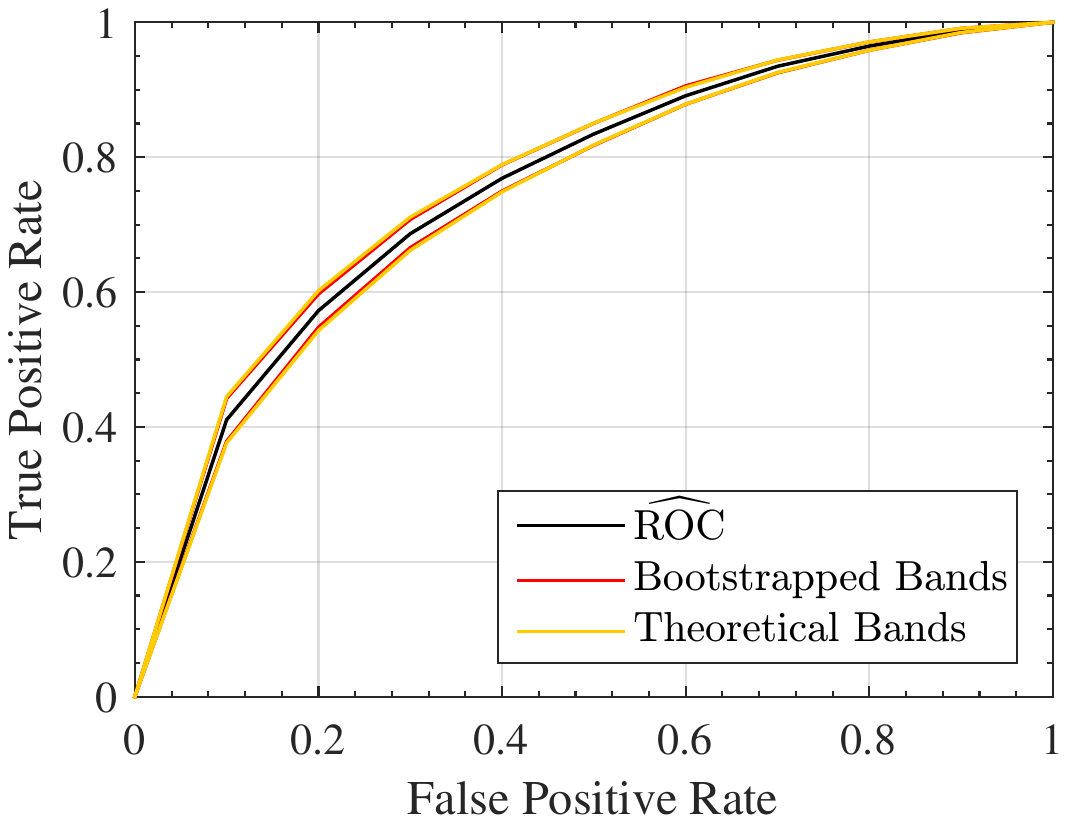}
\end{center}
\end{figure}
In future work we plan to conduct a full scale Monte Carlo exercise where the
above simulation is repeated numerous times and empirical coverage frequencies
are compared with nominal coverage probabilities.

\section{AUC and Model Comparison and Selection}

When the ROCs of two potentially misspecified propensity score models cross each other,
the area under the ROC curve (AUC) provides a heuristic crterion for model
selection in the machine learning literature. A formal statistical comparison
between the AUCs of estimated ROCs requires deriving their asymptotic
distributions. This section undertakes such a task for parametric propensity
score models.  First, we investigate how the AUCs can be used to obtain
parameter estimates.  Second, we derive the asymptoic distribution of sample
AUCs based on estimated parameters. Lastly, these results are used to provide
model selection tests and model selection criterion.

\subsection{Maximal AUC Estimator}\label{maximal auc estimator}

The parameters $\theta$ in $p\(X, \theta\)$ with parameters $\theta$ are
typically estimated by optimizing criterion (or loss) functions
such as the log likelihood (MLE or cross-entropy), or MSE (weighted or unweighted least square loss).
The AUC (Area under Curve) is an alternative criterion choice to cross-entropy
and MSE.  We show that the maximum AUC estimator is mathematically equivalent to the maximum rank
		 correlation estimator of \cite{sherman1993limiting}. While it
may be less efficient than MLE under correct specification,  it allows for
certain degrees of robustness, and is still consistent and asymptotically normal for a semiparametric
single index model.

The sample AUC corresponding to $\theta$ is given by
\begin{align}\begin{split}\nonumber
	\text{SAUC}\(\theta\) =
\int \hat \beta_\theta\(\alpha\) d\alpha =
\int \hat \beta_\theta\(\hat \alpha_\theta^{-1}\(\hat \alpha_\theta\)\) \mathrm{d}\hat\alpha_\theta =  \int
	\hat\beta_\theta\(c\) \frac{\mathrm{d}\hat\alpha_\theta\(c\)}{\mathrm{d}c} \mathrm{d}c
 \end{split}\end{align}
where
\begin{align}\begin{split}\nonumber
	\hat\beta_\theta\(c\) =& \frac{1}{\hat p} \frac1n \sum_{i=1}^n \mathds{1}\(p\(x_i, \theta\) >
	c\) y_i,
	\quad\text{and}\\
	\hat\alpha_\theta\(c\) =& \frac{1}{1-\hat p} \frac1n \sum_{j=1}^n \mathds{1}\(p\(x_j, \theta\) > c\)
	\(1-y_j\).
 \end{split}\end{align}
Using the notion of Dirac functions,
$\mathrm{d} \mathds{1}\(p\(x_j, \theta\) > c\) / \mathrm{d}c =
\delta_{p\(x_j,\theta\)}\(c\)$, we can write
\begin{align}\begin{split}\label{sample AUC}
    	\text{SAUC}\(\hat\theta\) = \frac{1}{n^2\hat p\(1-\hat p\)}
    	\sum_{i=1}^n \sum_{j=1}^n
    	\mathds{1}\(p\(x_i, \hat\theta\) > p\(x_j, \hat\theta\)\) y_i \(1-y_j\)
     \end{split}
 \end{align}
This takes the form of a U-process, which is $\sqrt{n}$ consistent and converges
asymptotically to a Gaussian process (\cite{sherman1993limiting} and
\cite{hoeffding1948class}).

The sample AUC converges to a population AUC, defined as
\begin{align}\begin{split}\nonumber
	\mathrm{PAUC}\(\theta\) =
\int\beta_\theta\(\alpha\) d\alpha=
\int \beta_\theta\(\alpha_\theta^{-1}\(\alpha_\theta\)\) \mathrm{d}\alpha_\theta =  \int
	\beta_\theta\(c\) \frac{\mathrm{d}\alpha_\theta\(c\)}{\mathrm{d}c} \mathrm{d}c
 \end{split}\end{align}
		 such that
\begin{align}\begin{split}\nonumber
	\beta_\theta\(c\) =& \frac{1}{p} \int \mathds{1}\(p\(x, \theta\) > c\) p\(x\) f\(x\) \mathrm{d}x
	\quad\text{and}\\
	\alpha_\theta\(c\) =& \frac{1}{1-p} \int \mathds{1}\(p\(x, \theta\) >
c\) \(1-p\(x\)\) f\(x\) \mathrm{d}x.
 \end{split}\end{align}
Using the notion of Dirac functions again,
we compute that
\begin{align}\begin{split}
\label{population parametric AUC}
	\text{PAUC}\(\theta\) = \frac{1}{p\(1-p\)}\iint \mathds{1}\(p\(x, \theta\) > p\(w,\theta\)\)
	p\(x\) \(1 - p\(w\)\) f\(x\)  f\(w\) \mathrm{d}x \mathrm{d}w.
 \end{split}\end{align}
This integral would be maximized with respect to the indicator
		 function if the indicator is turned on whenever
$p\(x\) \(1 - p\(w\)\) > p\(w\) \(1 - p\(x\)\),$
	equivalently or whenever $p\(x\) > p\(w\)$. Under correct specification,
		 this can obviously be achieved when $\theta=\theta_0$, where
$p\(x,\theta_0\)=p\(x\) > p\(w\) = p\(w,\theta_0\).$
Therefore, by standard M-estimator arguments (e.g. \cite{newey_mcfadden}) the maximum AUC estimator
is consistent under correct specification and suitable sample regularity conditions.

\subsection{Single Index Model}

A common semiparametric specification of $p\(x_i, \theta\)$ is a
single index model, where
$p\(x_i, \theta\) = \Lambda\(x_i'\beta\(\theta\)\)$ 
and $\Lambda\(\cdot\)$ is strictly increasing but may be unknown. In this case,
\begin{align}\begin{split}\nonumber
	\text{SAUC}\(\theta\) = \frac{1}{n^2\hat p\(1-\hat p\)}
	\sum_{i=1}^n \sum_{j=1}^n
	\mathds{1}\(x_i'\beta\(\theta\) > x_j'\beta\(\theta\)\) \mathds{1}\(y_i > y_j\)
 \end{split}\end{align}
since $\mathds{1}\(y_i > y_j\) = y_i \(1-y_j\)$ for binary $y_i, y_j$. As in
\cite{sherman1993limiting} $\beta\(\theta\) = \(1,\theta\)$ in order to reduce
the dimension of the parameter space by $1$ since $\beta$ is only identified up
to a multiplicative factor.

This is exactly the maximum rank correlation estimator of \cite{sherman1993limiting}.
The theoretical consistency of $\hat\theta\overset{\mathbb{P}}{\rightarrow} \theta_0$ only requires that $\theta_0$ is the only point in $\theta\in\Theta$ such that
\begin{align}
    \begin{split}\nonumber
    	x'\beta\(\theta\) > w'\beta\(\theta\)\quad\text{if and only if}\quad p\(x\) > p\(w\),
    	\quad \forall x, w.
    \end{split}
\end{align}
\cite{sherman1993limiting} develops the asymptotic properties of $\hat\theta$
and shows that $\sqrt{n}\(\hat\theta-\theta_0\)$ is asymptotically linear
under the assumption of a correctly
specified model of the form
\begin{align}\begin{split}\nonumber
y_i = D\(g\(x_i'\beta\(\theta\), \epsilon_i\)\)
 \end{split}\end{align}
where $g\(\cdot,\cdot\)$ is smooth and strictly increasing in both arguments,
$D\(\cdot\)$ is weakly increasing, and $\epsilon_i \perp x_i$. However, the high
level framework in \cite{sherman1993limiting}  is readily generalizable to allow
for potential misspecification where the single index assumption does not hold.

Under misspecification we assume that sufficient regularity conditions hold such that
a pseudo true value is uniquely defined in the population:
\begin{align}\begin{split}\nonumber
\theta^* = \arg\max_{\theta \in \Theta} \mathbb{E} \[\mathds{1}\(X_i'\beta\(\theta\) >
X_j'\beta\(\theta\)\) \mathds{1}\(Y_i > Y_j\)\].
 \end{split}\end{align}
We note the sections 1-5 of \cite{sherman1993limiting} hold generally without
the single index assumption.
Only section 6 of \cite{sherman1993limiting}, the exact form of the asymptotic variance,
the influence function, and the Hessian matrix, need to be generalized to allow
for misspecification. The appendix provides these general forms,
which specialize to Section 6 of \cite{sherman1993limiting} under the single index model,
and verifies that
\begin{align}\begin{split}\label{single index influence function}
\sqrt{n}\(\hat\theta-\theta^*\) = \frac{1}{\sqrt{n}} \sum_{i=1}^n \kappa_i +
o_P\(\frac{1}{\sqrt{n}}\),
 \end{split}\end{align}
where $\hat\theta = \arg\max_{\theta\in\Theta} \text{SAUC}\(\theta\)$.  Note
that \eqref{single index influence function} generally holds for $\hat\theta$
obtained from optimizing many criterion functions such as the KLIC, with
a corresponding influence function $\kappa_i$ that is dependent on the
criterion function.

The single index model is semiparametric and does not require knowledge of the
transformation function $\Lambda\(\cdot\)$. If $\Lambda\(\cdot\)$ is known (so
is $p\(x, \theta\)$) and the model is fully parametric, the AUC can
alternatively be estimated by the sample analog of \eqref{population parametric
AUC}:
\begin{align}\begin{split}\label{sample parametric AUC}
\text{SAUC}_p\(\hat\theta\)
= \frac{1}{\hat p\(1-\hat p\)} \frac{1}{n^2} \sum_{i=1}^n \sum_{j=1}^n
\mathds{1}\(p\(x_i,
\hat\theta\) > p\(x_j,\hat \theta\)\) p\(x_i, \hat\theta\) \(1 - p\(x_j,
\hat\theta\)\).
\end{split}\end{align}
When $p\(x_j,\hat \theta\)$ is correctly specified and when $\hat\theta$ is
obtained by MLE, \eqref{sample parametric AUC}
is more efficient than \eqref{sample AUC} for estimating \eqref{population parametric AUC}.
The asymptotic efficiency gain can be analytically calculated.

\subsection{Asymptotic Normality of AUC}

This section makes further use of the U-process convergence
results in \cite{sherman1993limiting} to derive the influence function
representation and asymptotic normality of the sample estimated AUC. These
results are used in the next section to provide the basis for AUC-based model comparison
tests and model selection criteria. Previous works by
\cite{hanley1982meaning} and \cite{hsieh1996nonparametric} that treated the SAUC
as two-sample U-statistics (see \cite{lehmann2004elements}) do not account for
the parameter estimation uncertainty and the random denominator containing
  $\hat{p} = 1/n\sum^{n}_{i = 1}y_{i}$ as derived in Section \ref{maximal auc estimator}.

Let $z_{i} = \(x_{i}, y_{i}\)$ and $\omega\(z_{i}, z_{j}, \theta\) =
\mathds{1}\(p\(x_{i}, \theta\) > p\(x_{j}, \theta\)\)\mathds{1}\(y_{i} >
y_{j}\)$. Recall that
\begin{align}\begin{split}\label{sample parameter estimated AUC}
	\hat A
=& \text{SAUC}\(\hat\theta\) = \frac{1}{n^{2}\hat{p}\(1 -
\hat{p}\)}\sum^{n}_{i = 1}\sum^{n}_{j = 1}\omega\(z_{i}, z_{j}, \hat\theta\),\\
A 
=& \text{PAUC}\(\theta^*\)  = \frac{1}{p\(1-p\)} E  \omega\(z_{i}, z_{j}, \theta^*\).
\end{split}\end{align}
The goal is to obtain the linear representation and limiting distribution of
$\hat A$:
\begin{align}\begin{split}\label{asymptotic normality of SAUC}
    \sqrt{n}\(\hat{A} - A\) =
\frac{1}{\sqrt{n}} \sum_{i=1}^n \xi_i + o_{\mathbb{P}}\(\frac{1}{\sqrt{n}}\),\quad
    \sqrt{n}\(\hat{A}- A\)  \overset{d}{\longrightarrow} N\(0, Var\(\xi_i\)\).
\end{split}\end{align}
To obtain \eqref{asymptotic normality of SAUC} we begin by writing
\begin{align}\begin{split}\nonumber
    \sqrt{n}\(\hat{A} - A\) = \frac{\sqrt{n}}{n^{2}\hat{p}\(1 -
\hat{p}\)}\sum^{n}_{i = 1}\sum^{n}_{j = 1}\(\omega\(z_{i}, z_{j}, \hat\theta\) - A\hat{p}\(1 - \hat{p}\)\).
\end{split}\end{align}
Note that
$\hat{p}\(1 - \hat{p}\) \overset{\mathbb{P}}{\longrightarrow} p\(1 - p\),$
it suffices to show that
for some $\zeta_i$
\begin{align}\begin{split}\nonumber
    \frac{\sqrt{n}}{n^{2}}\sum^{n}_{i = 1}\sum^{n}_{j = 1}\(\omega\(z_{i},
z_{j}, \hat\theta\) - A\hat{p}\(1 - \hat{p}\)\)
= \frac{1}{\sqrt{n}} \sum_{i=1}^n \zeta_i + o_{\mathbb{P}}\(\frac{1}{\sqrt{n}}\).
\end{split}\end{align}
By arguments of Slutsky's lemma, then
\begin{align}\begin{split}\nonumber
\sqrt{n}\(\hat{A} - A\)=
\frac{1}{\sqrt{n}} \sum_{i=1}^n \frac{1}{p\(1-p\)} \zeta_i + o_{\mathbb{P}}\(\frac{1}{\sqrt{n}}\).
\overset{d}{\longrightarrow} N\(0, \frac{Var\(\zeta_i\)}{p^{2}\(1 - p\)^{2}}\).
\end{split}\end{align}
Noting that
\begin{align}\begin{split}\nonumber
    \hat{p}\(1 - \hat{p}\) & = \frac{1}{n}\sum^{n}_{i = 1}y_{i} - \frac{1}{n^{2}}\sum^{n}_{i = 1}\sum^{n}_{j = 1}y_{i}y_{j}  = \frac{1}{n^{2}}\sum^{n}_{i = 1}\sum^{n}_{j = 1}\[y_{i}\(1 - y_{j}\)\],
\end{split}\end{align}
and that $\hat{p}\(1 - \hat{p}\) = \frac{1}{n^{2}}\sum^{n}_{i = 1}\sum^{n}_{j =
1}\(\hat{p}\(1 - \hat{p}\)\)$,
we can rewrite as
\begin{align}\begin{split}\nonumber
    \frac{\sqrt{n}}{n^{2}}\sum^{n}_{i = 1}\sum^{n}_{j = 1}\(\omega\(z_{i},
z_{j}, \hat\theta\) - A\hat{p}\(1 - \hat{p}\)\) =
\frac{\sqrt{n}}{n^{2}}\sum^{n}_{i = 1}\sum^{n}_{j =
1}\underbrace{\[\omega\(z_{i}, z_{j}, \hat\theta\) - Ay_{i}\(1 - y_{j}\)\]}_{\eta\(z_{i}, z_{j}, \theta\)},
\end{split}\end{align}
where, using the definition of $\omega\(z_{i}, z_{j}, \theta\)$ and note that $\mathds{1}\(y_{i} > y_{j}\) = y_{i}\(1 - y_{j}\)$,
\begin{align}\begin{split}\nonumber
    \eta\(z_{i}, z_{j}, \theta\) = \(\mathds{1}\(p(x_{i}, \theta) > p\(x_{j}, \theta\)\) - A\)y_{i}\(1 - y_{j}\)
\end{split}\end{align}
If we define,
\begin{align}\begin{split}\nonumber
\hat Q\(\theta\) = \frac{1}{n^{2}}\sum^{n}_{i = 1}\sum^{n}_{j = 1}
\eta\(z_i, z_j, \theta\)\quad\text{and}\quad
Q\(\theta\) = \mathbb{E}\[\eta\(z_i, z_j, \theta\)\] \equiv 0,
\end{split}\end{align}
then we can invoke the U-process stochastic equicontinuity results in
\cite{sherman1993limiting}:
\begin{align}\begin{split}\nonumber
\hat Q\(\hat\theta\) - \hat Q\(\theta^*\) - Q\(\hat\theta\) + Q\(\theta^*\) =
o_{\mathbb{P}}\(\frac{1}{\sqrt{n}}\),
\end{split}\end{align}
such that, using \eqref{single index influence function}, where $\hat\theta$ is
obtaining from optimizing a general criterion function that may or may not be
the SAUC,
\begin{align}\begin{split}\nonumber
\hat Q\(\hat\theta\) - Q\(\theta^*\)
=& \hat Q\(\theta^*\) - Q\(\theta^*\) + Q\(\hat\theta\) - Q\(\theta^*\)  +
o_{\mathbb{P}}\(\frac{1}{\sqrt{n}}\)\\
=&\frac{1}{n^{2}}\sum^{n}_{i = 1}\sum^{n}_{j = 1}
\eta\(z_i, j_j, \theta^*\)
+ \frac{\partial}{\partial\theta} Q\(\theta^*\) \frac1n \sum_{i=1}^n \kappa_i +
o_{\mathbb{P}}\(\frac{1}{\sqrt{n}}\)\\
=&\frac1n \sum_{i=1}^n \(\eta_{1}\(z_{i}, \theta^*\) + \eta_{2}\(z_{i}, \theta^*\)
+ \frac{\partial}{\partial\theta} Q\(\theta^*\)  \kappa_i\) +
o_{\mathbb{P}}\(\frac{1}{\sqrt{n}}\).
\end{split}\end{align}
where by H\'ajek projection,
\begin{align}\begin{split}\nonumber
    \eta_{1}\(z_{i}, \theta\)  = \mathbb{E}_{z_{j}}\[\eta\(z_{i}, z_{j},
\theta\)\], \quad \eta_{2}\(z_{j}, \theta\)  = \mathbb{E}_{z_{i}}\[\eta\(z_{i}, z_{j}, \theta\)\],
\end{split}\end{align}
We therefore conclude that \eqref{asymptotic normality of SAUC} holds with
\begin{align}\begin{split}\nonumber
\zeta_i = \eta_{1}\(z_{i}, \theta^*\) + \eta_{2}\(z_{i}, \theta^*\)
+ \frac{\partial}{\partial\theta} Q\(\theta^*\)  \kappa_i\quad\text{and}\quad
\xi_i = \frac{1}{p\(1-p\)} \zeta_i.
\end{split}\end{align}
If $\hat\theta$ is the maximum AUC estimator, then
$\frac{\partial}{\partial\theta} Q\(\theta^*\) =0$, so that the last term
vanishes from $\zeta_i$, in which case parameter estimation uncertainty vanishes
asymptotically in first order.  On the other hand, parameter uncertainty is
first order important when $\hat\theta$ optimizes another criterion function
that differs from the AUC.

\eqref{asymptotic normality of SAUC} can be used for
analytically constructing asymptotic tests by estimating sample analogs of $\xi_i$,
or for justifying the validity of resampling procedures.

The results in the previous three sections assume that the entire sample is used
for both parameter and AUC estimation, or that a fixed sample splitting scheme
is employed. A random split of the sample into parameter estimation and AUC
calculation can be accommodated by introducing random indicators $D_i$, such
that $D_i=1$ denotes the parameter estimation subsample and $D_i=0$ denotes the
AUC calculation subsample. Previous results can be readily generalized by
replacing \eqref{single index influence function} with $D_i \kappa_i$,  the
double summation in \eqref{sample parameter estimated AUC} with
\begin{align}\begin{split}\nonumber
	\sum^{n}_{i = 1}\sum^{n}_{j = 1} \(1-D_i\) \(1-D_j\) \omega\(z_{i}, z_{j}, \hat\theta\),
\end{split}\end{align}
and corresponding changes in related influence functions. In general, sample
splitting increases the variance of the estimated AUC.

\subsection{Model Comparison Tests and Model Selection Criteria}

Following a vast literature in econometrics and statistics, results derived in the
last two sections provide the basis for constructing model selection tests
and forming model selection criteria. This section provides a brief overview.

When at least one model is correctly specified, any criterion function
(such as cross-entropy or KLIC) combined with suitable penalization can be
used to select the most parsimonious correct model, or to be used to form
specification test statistics for individual models.
Under the assumption of potentially misspecification of all models under
consideration, the end result may depend on the choice of criterion
functions even asymptotically.
KLIC or MSE are popular choices in econometrics and statistics, while in
computer science, AUC has been advocated for this purpose. We therefore focus on
the AUC.

If the AUC is used as the criterion functions both for estimating the parameters of
the competing models and for testing and selecting between models, then the
generaly results from \cite{vuong1989likelihood}, and \cite{hong2003generalized},
apply.  It is also possible that a different criterion function, such as cross
entropy, is used to estimate parameters before the use of the AUC criterion
function to compare or select between competing models.
The sample can be split between estimation and model testing and
selection, or the same sample can be employed for both purposes.
Both cases can be handled similarly using asymptotic linear representations.


Consider two competing models with parameters $\theta$ and $\vartheta$, and
corresponding sample AUCs $\hat A_1\(\hat\theta\)$ and   $\hat A_2\(\hat\vartheta\)$, then  it follows from
\eqref{asymptotic normality of SAUC}
that
\begin{align}\begin{split}\label{first order difference in AUC}
\hat A_1\(\hat\theta\) - \hat A_2\(\hat\vartheta\) = \(A_1\(\theta^*\) -
A_2\(\vartheta^*\)\) + \frac{1}{n} \sum_{i=1}^n \(\xi_i^1 - \xi_i^2\) +
o_{\mathbb{P}}\(\frac{1}{\sqrt{n}}\).
 \end{split}\end{align}
A test of the null hypothesis of $A_1\(\theta^*\) = A_2\(\vartheta^*\)$
between two nonnested models can rely asymptotically on the nondegenerate distribution of
$\xi_i^1 - \xi_i^2$. For nested models where $\xi_i^1-\xi_i^2$ may be degenerate, a second
order expansion of the parameter estimation uncertainty may be
required to obtain a nondegenerate limit distribution.

A model selection criterion typically takes the form of $\hat A\(\hat\theta\) + \kappa_n
\dim\(\theta\)$ which penalizes the number of parameters in order to be
parsimonious. Consistent model selection, in the sense of choosing the most
parsimonious and best fit model with probability model, is achieved by requiring
the penalization term $\kappa_n$ to satisfy $\kappa_n \rightarrow
0$ and  $\sqrt{n} \kappa_n \rightarrow \infty$. The latter requirement can be
relaxed to $n \kappa_n \rightarrow \infty$ between nested models.

\subsection{Simulation Results}
We continue to use the data set of Section \ref{simulation1} as an illustrating
example. First the true model coefficients (i.e. 1 and -0.5) are assumed to be
known and are used to estimate the theoretical asymptotic standard deviation of
SAUC based on analytic sample analogs as 0.00632.
We also bootstrap test data set for 1,000 times, calculating AUCs for each
iteration at the true coefficients. The bootstrapped standard deviation of the
SAUC thus obtained is 0.00629, which accords with the analytic estimate of the
theoretical value in the absence of parameter estimation uncertainty and random
sample splitting.

Next we incorporate parameter estimation uncertainty.
Table \ref{maxAUC} lists the estimated coefficients of the logit model
(with a fixed intercept as 0) using a maximal likelihood approach and
a maximum AUC (MAUC) approach using the training data set.
We estimate the standard deviation of the two coefficients in the logit model
using 1000 bootstrap repetitions of the data sample.
The coefficient estimates are both close to the true model parameters, but slightly different.
 For the first coefficient, the point estimate of MAUC is closer to the true
parameter than MLE is, but with a larger standard deviation estimate. The
reverse is true for the second coefficient. Finally we report the AUC estimates
using the mean of the bootstrap
based on MLE and MAUC, and their bootstrap standard deviations. The AUC
estimate based on MAUC is slightly larger than that based on MLE. The bootstrap
standard deviations are essentially identical,  and show that a substantial
component (more than 50\%) of the standard deviation is due to the uncertainty
from parameter estimation and from random sample splitting.

\begin{table}
\caption{Parameter Estimations using MLE and Maximal AUC Approaches} \label{maxAUC}
\begin{center}
\begin{tabular}{l|ccc}
  \toprule
 & True Model& MLE& Maximal AUC\\
   \midrule
  X1 (mean)& 1& 1.0045& 0.9992\\
  X1 (std)& NA& 0.0178& 0.0381\\
  X2 (mean)& -0.5& -0.5502& -0.5688\\
  X2 (std)& NA& 0.0301& 0.0184\\
  AUC (mean)& 0.7683& 0.7689& 0.7694\\
  AUC (std)& NA& 0.0151 & 0.0151\\
    \bottomrule
\end{tabular}
\end{center}
\end{table}

Table \ref{ModelSelection} reports an AUC-based model selection exercise between
two misspecfied models given the data generating process in section \ref{simulation1}.
The first (M1) is a logit model with $p(X_{1}) =\frac{\exp(\theta_1 X_{1})}{1 +
\exp(\theta_1 X_{1})}$; the model (M2) is a logit model with $p(X_{2})
=\frac{\exp(\theta_2 X_{2})}{1 + \exp(\theta_2 X_{2})}$. We bootstrap the whole
data set 1,000 times. In each iteration, we randomly partition the whole data
set in two halves, one as the training set for parameter estimation and the
other one as the test set for AUC calculation. We then estimate M1 and M2
separately and obtain their AUCs (A1 and A2) accordingly. Finally, we calculate
the mean AUC spreads and corresponding standard deviation, using both
the theoretical first order approximation in \eqref{first order difference in AUC} and using
bootstrapping.
We obtain a significant $z$ score: $z=\frac{\hat A_1\(\hat\theta\) - \hat
A_2\(\hat\vartheta\)}{std(\hat A_1\(\hat\theta\) - \hat A_2\(\hat\vartheta\))}$,
which rejects the null hypothesis that M1 is equivalent to M2 in favor of the
directional alternative hypothesis that M1 is a better model.
\begin{table}
\caption{Model Selection} \label{ModelSelection}
\begin{center}
\begin{tabular}{l|cc}
  \toprule
  & Bootstrap& Theoretical\\
  \midrule
  A1 (mean)&0.7341& 0.7314\\
  A2 (mean)&0.6214& 0.6191\\
  A1-A2 (mean)&0.1127& 0.1124\\
  A1-A2 (std)&0.0102 & 0.0103\\
  \bottomrule
\end{tabular}
\end{center}
\end{table}

As in section \ref{simulation1}, a full scale Monte Carlo simulation to gauge
the bias, variance and mean square errors of competing estimators is left for
future investigation.

\section{Human Decisions and Machine Decisions}\label{performance}

In a dataset, we typically observe a sample of the triples $\(Y_i, D_i, X_i\),
i=1,\ldots,n$, where $Y_i$ is the label outcome variable such as whether the
birth is normal or defect, $D_i$ is the predicted
value for $Y_i$ by a human decision maker such as a doctor, and $X_i$ is the set of
features that can be used to predict $Y_i$ or $D_i$. A machine learning
algorithm learns a model $\hat p\(X_i\)$ using the training subset of $Y_i,
X_i$,  and forms a ROC curve based on the prediction rules $\hat Y_i = 1\(\hat
p\(X_i\) > c\)$ when $c$ varies between $0$ and $1$.

An aggregate human TPR/FPR pair is calculated using $Y_i,
D_i, i=1,\ldots,n$ as
\begin{align}\begin{split}\nonumber
\text{TPR} = \frac{\sum_{i=1}^n Y_i D_i}{\sum_{i=1}^n Y_i}\quad\text{and}\quad
\text{FPR} = \frac{\sum_{i=1}^n \(1-Y_i\) D_i}{\sum_{i=1}^n \(1-Y_i\)}
\end{split}\end{align}

A popular approach to benchmark the performance of human (doctor) decisions is
to see whether the (TPR, FPR) pair lies above or below the AI ROC curve. That
the human (TPR, FPR) pair lies below machine ROC is interpreted as evidence
that AI dominates human decision making, and vice versa. In the former case,
there exists a point on the machine ROC with the same FPR as human decision
makers but higher TPR than  human decision makers. Alternatively, there exists
another point on the machine ROC with the same TPR as human decision makers but
with lower FPR. No point on the machine ROC is dominated by the human
(TPR, FPR) pair in the sense of having lower values in both ratios. See, for
example, \cite{QJEbail}.

These interpretations are based on the strong assumption that doctors employ the
same model and use the same cutoff value $c$ to form predictions. That the
machine generated ROC dominates human  (TPR, FPR) pair is then attributed to
doctors using less information or a misspecified model.  In reality,
both information and incentive heterogeneity are likely to be present among
doctors. The information set that humans
base their diagnosis on may differ from
the feature set used in the machine learning algorithm. For example,
doctors may have observed patients' complexion and discovered other information
when conversing with patients, which are not recorded as features in the data set.
Furthermore, even if doctors employ the same model using the same information
set, they might differ in the cutoff value $c$ because of the difference in
their perception regarding the tradeoff between type I and type II errors.
This section explores how heterogeneous incentive and information asymmetry complicate
the comparison between machines and humans.

Our data set has a panel structure where multiple observations are available for
each doctor, such that $n$  (TPR, FPR) pairs can be drawn for each of the $n$
doctors. The panel structure does not eliminate concerns about information and
incentive heterogeneity, and is subject to the criticism based on Jensen's
inequality outlined in section \ref{introduction}. However, it does provide a
useful source of identification that we will explore in this section.

Furthermore, using $D_i, X_i, i=1,\ldots,n$, we can train a machine learning
model for predicting human decisions $D_i$ based on the features $X_i$, and
generate a probabilistic estimate of $q\(X_i\) = \mathbb{P}\(D_i = 1 \vert X_i\)$. In
other words, $q\(X_i\)$ is a model of ``predictded doctors''. Next,
we can use $q\(X_i\)$ to forecast $Y_i$ based on the rules
$\hat Y_i = 1\(q\(X_i\) > c\)$ where $c$ varies between $0$ and $1$, and form a
resulting ROC curve. This will be called the ``predicted doctor ROC''.
The rest of this section explores the relation between the aggregate (TPR, FPR)
pair, the machine learned ROC (MROC), and the predicted doctor ROC (PROC) under
different incentive and information structures. We maintain the assumption that
the machine can accurately learn the correct $p\(X_i\) = \mathbb{P}\(Y=1 \vert X_i\)$.

\subsection{The PROC with homogeneous information and incentive}

First we note that regardless of the information and incentive structure,
MROC and PROC might coincide even when $p\(X\)$ and $q\(X\)$
differ.  In particular, whenever $q\(x\)$ is a strictly increasing
transformation of $p\(x\)$, their corresponding
ROCs must coincidence, because by monotonicity, there exists a function
$\bar c\(t\)$ such that $q\(x\) > t$ if and only if $p\(x\) > \bar
c\(t\)$. Therefore a point on the MPOC corresponding to threshold
$t$ maps into a point on the PROC corresponding to threshold $\bar
c\(t\)$. Lemma \ref{same ROC if and only if} establishes the converse. For
example, $q\(x\)$ is a monotonic transformation of $p\(x\)$ when they are both
monotonic functions of a scalar $x$.

Next consider a situation where humans use the same model based on the observed
$X$, and are incentive homogeneous by using the same cutoff value $c_0$. In
other words, for some $\bar p\(x\)$, $D_i = 1\(\bar p\(X_i\) \geq c_0\)$. Then
the humans' ROC curve shrink to a TPR/FPR singleton, which also
coincides with the aggregate humans' TPR/FPR pair. $\bar p\(x\)$ can be
misspecified and need not be related to the true $p\(x\)$.
For example, Figure \ref{figure 7} is based on a model in which
$x\sim Uniform\(0,1\)$, $p\(x\) = x$, but humans mistakenly think
that $\bar p\(x\) = \vert \sin\(10x\)\vert$. The homogeneous humans'
decision rule is $\hat Y = \mathds{1}\(\bar p\(x\) > .5\)$.
Lemma \ref{ShrinkSingleton}  suggests that the PROC
reduces to a single point that is also the humans' aggregate TPR/FPR pair,
both of which lie strictly below the MROC. It will be on the MROC when $\bar
p\(\cdot\) = p\(\cdot\)$, or when $\bar p\(\cdot\)$ is an increasing
transformation of $p\(\cdot\)$.
Empirically, it is very unlikely that neither information nor incentive
heterogeneity is present.
\begin{figure}
\begin{center}
\caption{(Misspecified) Homogeneous Information}\label{figure 7}
\includegraphics[height=.32\textheight]{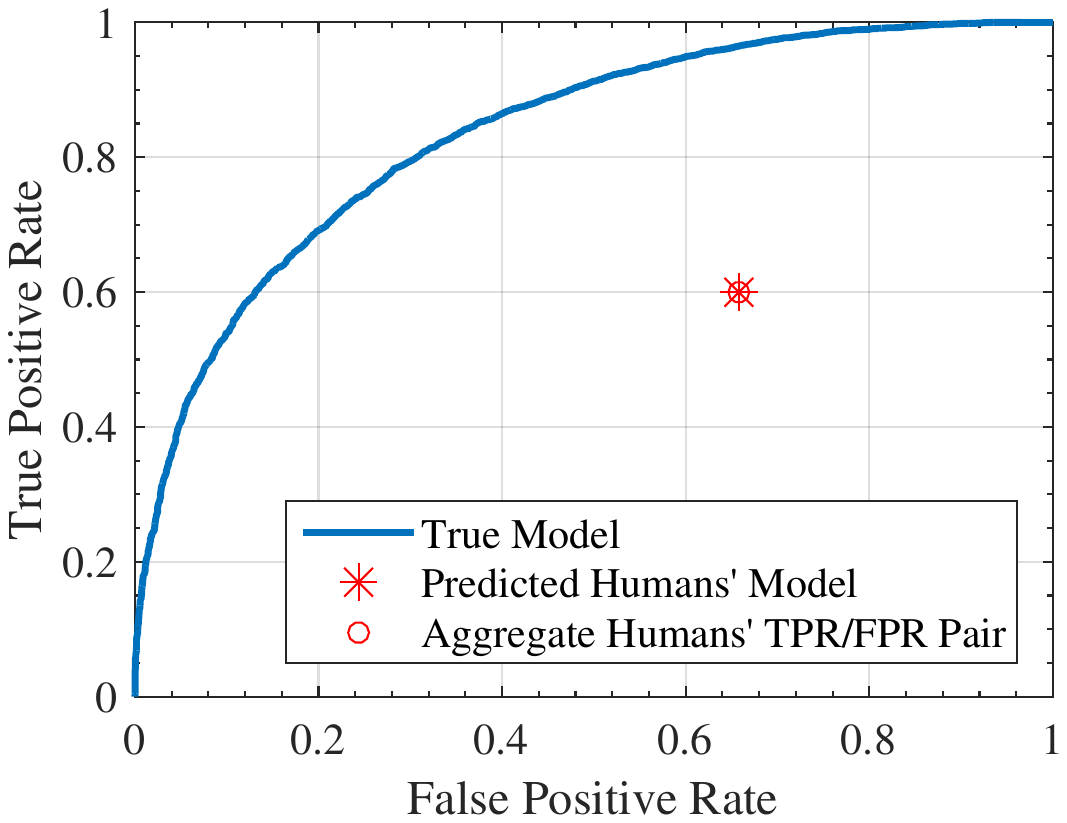}
\end{center}
\end{figure}

\subsection{Incentive Heterogeneity} \label{aggregatePair}

This section considers incentive heterogeneity (only, without information
asymmetry). Specifically, suppose all humans have access to the same information of $X$,
and they know the true $p\(X\)$. However,
each human employs their own idiosyncratic fixed cost for the optimal tradeoff
between true positives and
false positives, implying that humans are heterogeneous in the
cutoff values that they use in making predictions.

Moreover, each individual decision maker may not have a constant preference, and may
change her cutoff threshold $c$ based on observed features. In this case, each decision
maker does not need to be represented by a single point on the ROC curve.
For example, \cite{currie3} find that patients' demand on C-section
delivery can influence doctor's decision. If certain groups of patients with
similar characteristics have similar C-section demands, the cutoff value $c$ will
depend on these characteristics of patients $x$. \cite{chandra2011} find that hospitals
treat similar patients differently due to consideration of commercial benefits.
Therefore $c$ depends on both hospitals and the location of patients.
In the context of our empirical application in section \ref{application},
suppose $x$ is a one dimensional variable denoting age.  While the probability of birth
defect is likely to be strictly increasing in $x$, doctors may place a high
weight on diagnosing normal older couples correctly: they would not want older but normal
couples to forgo any chance of conception.
Then $c\(x\)$ can also be strictly increasing in $x$.
As a result, doctors may tend to diagnose younger couples as
abnormal but older couples as normal. The next subsection
illustrates this in more details.

If this type of incentive heterogeneity exists across the sample of decision
makers, or within each individual, the aggregate
TPR/FPR pair can lie below the optimal ROC curve, or even below the 45 degree
line, even if humans have better information processing capacities than
machines.  Lemma \ref{ROCHetero} provides a formal proof.

\subsubsection{Cutoff Thresholds}\label{cutoff points}
If the loss functions $C_{0R}$ and $C_{1A}$ for a decision maker in (\ref{C0C1})
depends on observed features $x$ as proposed by \cite{elliott2013predicting},
(\ref{c determined by eta and phi}) also shows that $c$ is no more a constant, but
instead depends on $x$.

Moreover, the predicted humans' ROC curve may also differ from
the true (and optimal) MROC curve, {\it even if} humans know the true $p\(X\)$.
For example, let $X\sim Uniform\(0,1\)$, $p\(X\) = X$, $\eta \sim Uniform\(0,1\)$, $Y=\mathds{1}\(p\(X\) > c\(X,\eta\)\)$,
\begin{align}\begin{split}\nonumber
c(X,\eta)= \eta \[\(X < .5\)+ \(X > .75\)\] + \eta X^2 \(.5 < X <
.75\).
\end{split}\end{align}
So that the ``predicted human'' becomes
{\begin{align}\begin{split}\nonumber
\bar p\(x\) = x  \[\(x < .5\)+ \(x > .75\)\] + 1/x \(.5 < x < .75\).
\end{split}\end{align}
}
Figure \ref{figure 5} plots the PROC curve of this example, part of which lies
strictly below the MROC.
\begin{figure}
\begin{center}
\caption{Incentive Heterogeneity, Incentive Distributions (Cutoff Threshold) Dependent on Features}\label{figure 5}
\includegraphics[height=.32\textheight]{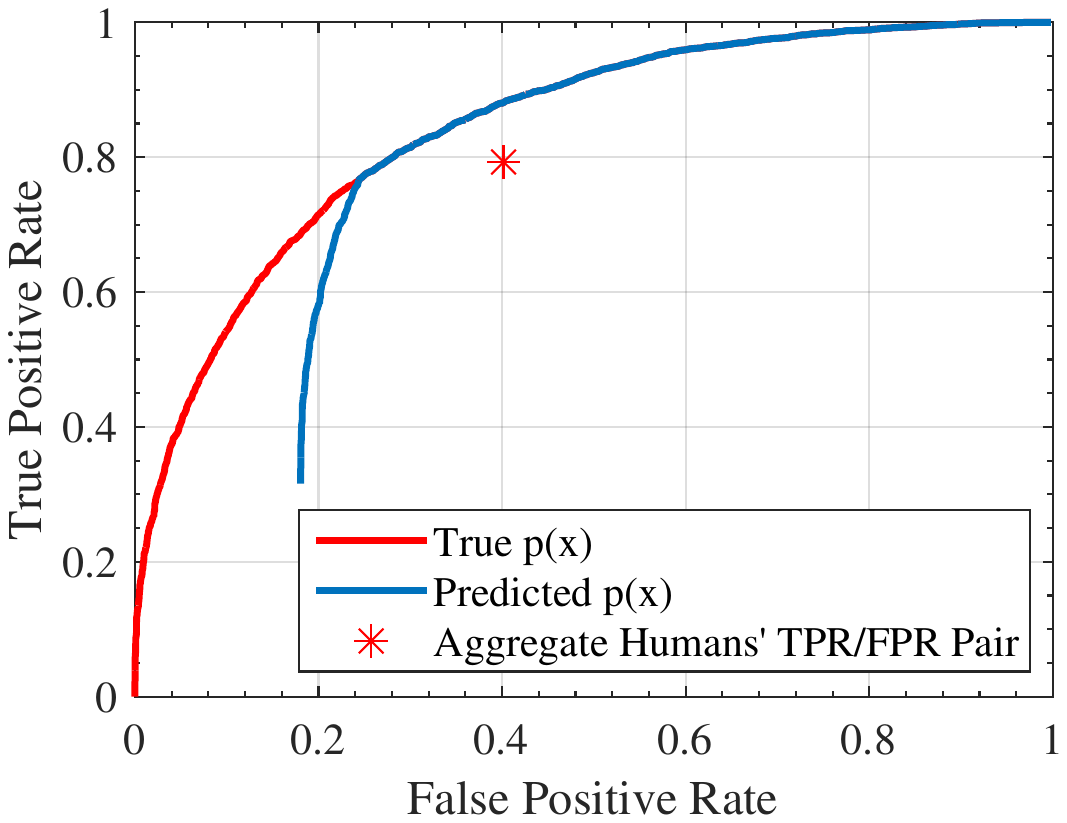}
\end{center}
\end{figure}

Incentive heterogeneity may be controlled by conditioning on a subset of the
features. If we postulate  that the features $x$ can be partitioned into $x_1, x_2$
where the cutoff function $c\(x_2\)$  only
depends on $x_2$, then we can conduct the analysis conditional
on $x_2$, and construct both ROC curves and the humans' pairs across
$x_1$ for each given $x_2$
to examine their relations. This requires plotting
multiple conditional ROCs across $x_1$, for each level of
$x_2$, to see if humans' pairs of TPR/FPR conditional on $x_2$
fall on each of the corresponding (to each $x_2$) ROC curves. If $x_2$ takes a small number of finite values, the analysis can
be done for each value of $x_2$. If $x_2$ is continuous, then some type of local smoothing will be
needed to generate these graphs. Unless $p\(x_1, x_2\)$ does not depend on $x_2$, if we
had plotted the aggregate ROC corresponding to $p\(x_1, x_2\)$
jointly in $x_1, x_2$, it will lie above the individual ROCs corresponding to
each level of $x_2$. As such, each of humans' PTPR/PFPR
pairs might still lie underneath the aggregate ROC due to incentive heterogeneity in
decision making.

\subsubsection{Uncover Incentive Heterogeneity from Data}\label{uncover}
A natural follow up question is the extent to which individual heterogeneity in
decision making can be recovered from empirical data. To be more specific, assume that humans have the same and correct information
as the machine learning algorithm does, i.e. have knowledge of the correct propensity
score function $p\(X\)$. In this subsection, we temporarily define the cutoff value as $S$ (instead of $c$), which is a random variable whose distribution may depend on $X$, with a
conditional cumulative distribution function denoted as $F_{S\vert X}\(S \vert
x\)$. First of all, from the empirical data, a learning algorithm recovers an
estimate of $p\(X\)$ using the real outcome as the label.  Then using
the decision of humans as the label, a learning algorithm can recover
{\begin{align}\begin{split}\nonumber
	q\(x\)= \mathbb{P}\(S <  p\(x\) \vert x\) = F_{S \vert X}\(p\(x\) \vert x\).
\end{split}\end{align}
}
Given a data set of $X_i$, $p_i=p\(X_i\)$, and $q_i=q\(X_i\)$, this translates into a relation of
{\begin{align}\begin{split}\nonumber
	q_i = h\(p_i, X_i\) \quad\text{where}\quad h\(p,x\) = F_{S\vert X}\(p,
	x\).
\end{split}\end{align}
}
A supervised learning algorithm can be used to estimate $h\(p,x\)$ using $q_i$
as the label and the tuple of $\(p_i, X_i\)$ as the features, and can be expected to
perform well if $h\(p,x\)$ satisfies suitable regularity or continuity
conditions.
There are two applications of the learned $h\(p,x\)$ function. First, when $S$ is independent of $X$, where $q\(X\) =
F_S\(p\(X\)\)$, $q_i$ is then a monotonically increasing transformation of
$p_i$. Therefore, by testing for monotonicity, we can infer whether the
distribution of the random threshold depends on observed features or not. Second, if
a panel data set is available, by comparing $h\(p,x\)$ for different human decision makers, we can
explicitly recover the incentive heterogeneity of their decision making.

\subsection{Information Asymmetry}

Next we abstract away from incentive heterogeneity and focus on information
asymmetry.  Suppose that in addition to the observed features $X$,
humans also make decision based on $U$, where $U$ is
not observed by the researchers or the machine learning algorithm. For example, $U$ could be
the facial complexion of the patient that the doctor observed during the day of
the clinic visits.  Let $p\(X,U\) = \mathbb{P}\(Y = 1 \vert U, X\)$.

Consider first a case where humans know the correctly specified propensity score
$p\(X,U\)$, and use a homogeneous decision rule $\hat Y = \mathds{1}\(p\(X,U\) > c\)$. In
this case, because $p\(X,U\)$ correctly uses more information than $p\(X\)$
does, the (unobserved) ROC formed by
$p\(X,U\)$ lies strictly above the MROC formed by the machine's propensity score
{\begin{align}\begin{split}\nonumber
p\(X\) = \int p\(X,U\) f\(U \vert X\) \mathrm{d}U.
\end{split}\end{align}}
Furthermore, the
aggregate human TPR/FPR pair must lie on the ROC formed by
$p\(X,U\)$, and hence above the machine's MROC curve. Lemma \ref{aboveROC}
provides the proof. Therefore, if humans are incentive \emph{homogenous} in decision making,
and their aggregate TPR/FPR pair is below the machine's ROC curve, one can conclude that
humans either do not know the true propensity score or use wrong information in
making decisions.

Next, consider a situation in which the human has potentially more, but
\emph{wrong} information, and employs a misspecified model
$\bar p\(x,u\)$.  In this case,
the predicted human's ROC curve is
defined by the ROC corresponding to the ``pseudo'' propensity score function:
\begin{align}\begin{split}\nonumber
\bar p\(x\) =
\mathbb{E}_{U \vert X} \[\mathds{1}\(\bar p\(X,U\)>c\)\] =
\int \mathds{1}\(\bar p\(X,U\)>c\)  f\(U \vert X\) \mathrm{d}U
\end{split}\end{align}
which is likely to differ from $p\(X\)$. The implied PROC might coincide with
the MROC, or might below it.

For example, let the true model be
$p\(x,u\) = \exp(x) / \(1+\exp(x)\),$
but human decision maker uses the model, for $u$ uniformly distributed on
$\(0,1\)$,
{\begin{align}\begin{split}\nonumber
\bar p\(x,u\) = \frac{\exp(x+u)}{1+\exp(x+u)}.
\end{split}\end{align}
}
In this model, despite using the wrong information model, the predicted human's
propensity score $\bar p\(x\) = \mathbb{E}_u \[\mathds{1}\(\bar p\(x,u\) \geq c\)\]$ is a monotonic
function of $x$, and
therefore has a PROC curve based on $x$ that coincides with
that of $p\(x\)$ (see Appendix \ref{same ROC if and only if}).

To generate divergence between the predicted humans' PROC curve and the MROC curve from
the true model, we would need to introduce either reverse monotonicity or
nonmonotonicy between $\bar p\(x\)$ and $p\(x\)$, or multidimensional
features $x$. If the human's information model has the wrong sign in $x$:
{\begin{align}\begin{split}\nonumber
\bar p\(x,u\) = \frac{\exp(-x+u)}{1+\exp(-x+u)}.
\end{split}\end{align}
}
Then the predicted humans' PROC curve and the true MROC curve differ in Figure \ref{figure 9}.
\begin{figure}
\begin{center}
\caption{Misspecified Heterogeneous and Non-monotonic Information}\label{figure 9}
\includegraphics[height=.32\textheight]{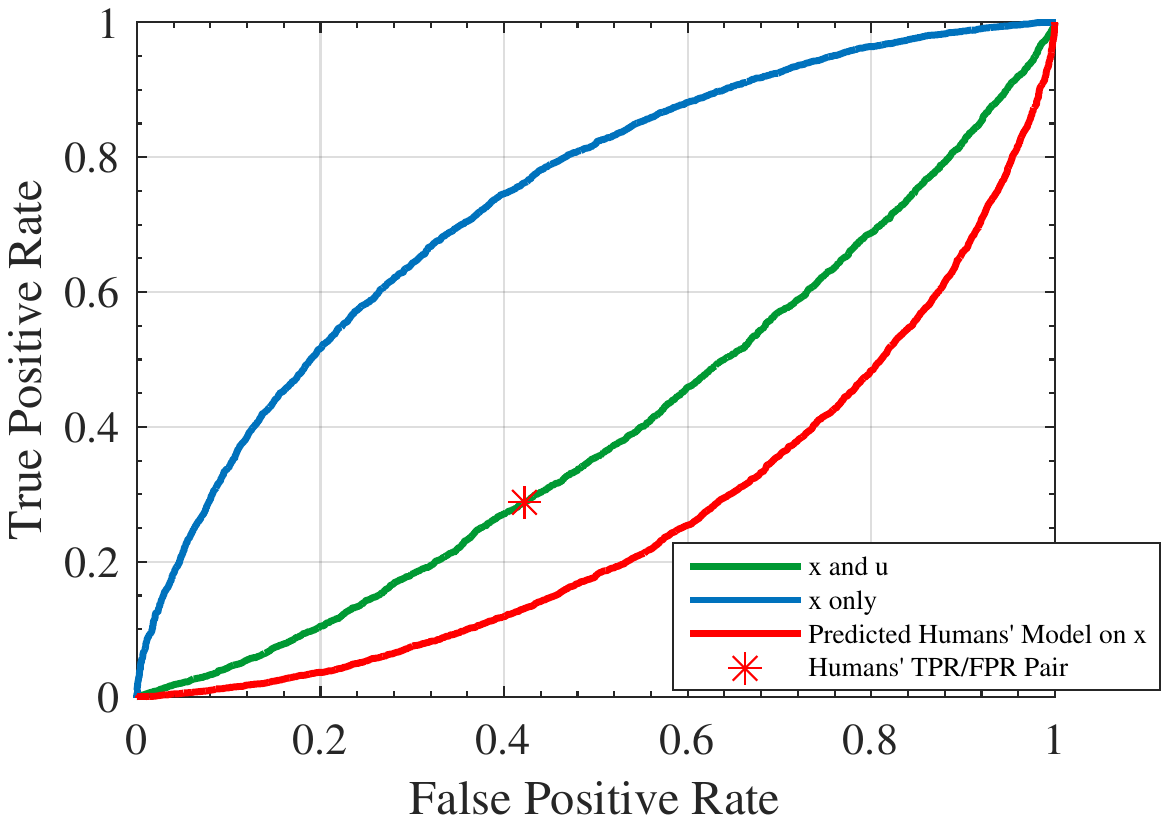}
\end{center}
\end{figure}
Next consider a two dimensional feature case. Let the true model be
{\begin{align}\begin{split}\nonumber
p\(x_1, x_2, u\) = \exp(x_1+x_2)/(1+\exp(x_1+x_2)),\quad x_1 \sim N\(0,1\),\quad x_2 \sim N\(0, 0.09\).
\end{split}\end{align}
}
And let the human's (wrong) information model be
{\begin{align}\begin{split}\nonumber
\tilde p\(x_1,x_2, u\) = \exp(x_1-x_2+u)/(1+\exp(x_1-x_2+u)),\ \text{where } u \sim N\(0,4\),
\end{split}\end{align}
}
Then we observe ROC curves in Figure \ref{figure 10}.

\begin{figure}
\begin{center}
\caption{Misspecified Heterogeneous and Multi-dimensional Information}\label{figure 10}
\includegraphics[height=.32\textheight]{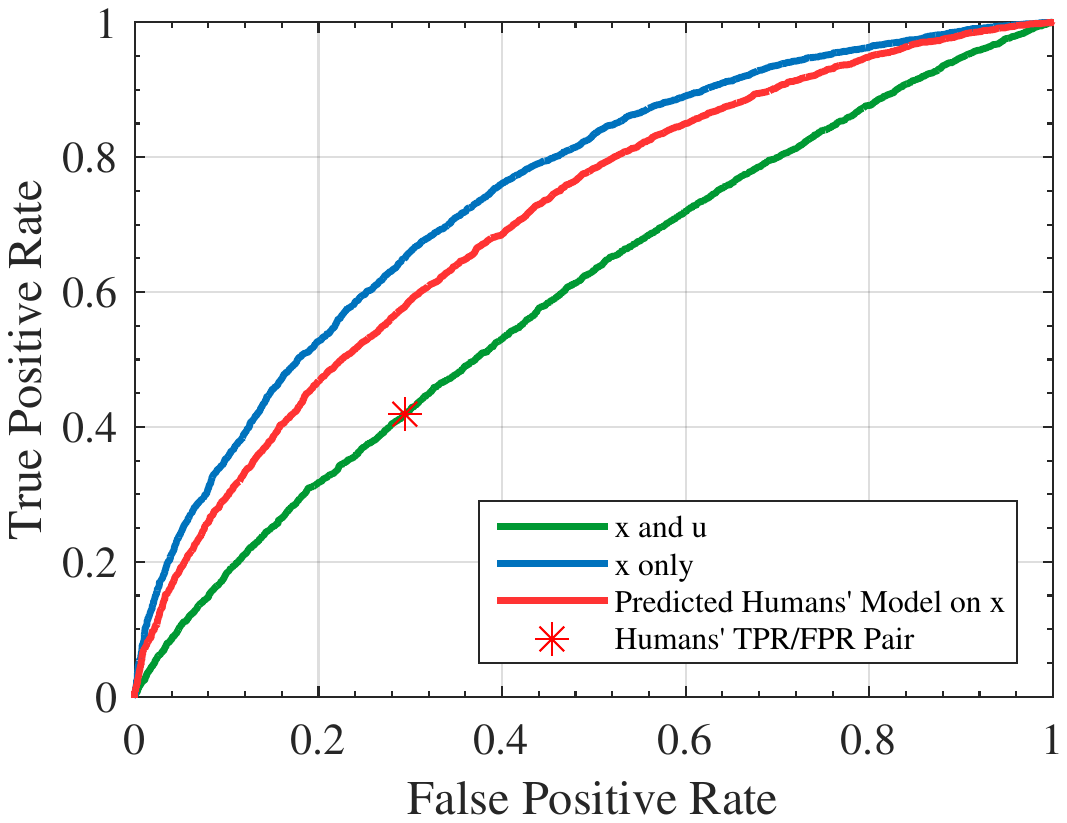}
\end{center}
\end{figure}

\section{Application: Birth Defect Decisions}\label{application}
To motivate and empirically illustrate our theoretical discussion, we make use
of a survey data set of high risk pregnancy diagnosis with more than a million
observations as an application of our findings.

\subsection{Data}
Our data is derived from the National Free Pre-Pregnancy Checkups (NFPC) in
Henan Province.
The NFPC is a population-based health survey of reproductive-aged couples and was administered
in China from January 1, 2014 to December 31, 2015 to expecting couples or
couples who plan to have children. The data set include most of the information
available to the doctor at the time of the diagnosis, including age and other
demographic characteristic, results from medical examination and clinical test,
individual and family history of diseases and drug use, pregnancy history, as
well as separate lifestyle and other living environment information for both
spouses. Altogether 282 features are available for each observation in the
sample. The label of the data indicates whether the pregnancy outcome
is normal (y = 0) or involves birth defects (y = 1). Of these samples,
4239(2.2\%) couples have an adverse pregnancy outcome.

In addition, the data set also includes doctor's pregnancy risk assessments.
The doctors' diagnosis is coded as 4 levels-0 for normal, 1 for high-risk in terms of female,
2 for high high-risk in terms of male, and 3 for high risk in terms of both female and male.
Unless otherwise noted, we regard 1, 2, 3 of doctor's assessment as a diagnosis
of risky outcome, and 0 as a diagnosis of normal pregnancy outcome. A total of
20.2\% of our sample was diagnosed as risky; overall, doctors achieve 78.58\%
average accuracy. The doctors' general tendency to classify a large portion of
normal birth as risky is suggestive of a high cost of false negatives relative
to the false positives.

We exclude from the original data sample observations with missing information
on pregnancy outcome, as well as those for which more than 50\% of feature values
are missing. The final data used in the current analysis includes 192551 couples,
who are diagnosed by 1526 doctors.

\subsection{Machine Learning Algorithm}

We experiment with a number of machine learning binary classification algorithm
including logistic regression, deep neural networks, and regression trees. For
parametric models, deep neural networks can achieve good classification
performance. Similar to the simulation experiment, we bootstrap the whole sample
(100) times and draw the 95\% confidence bands based on the bootstrapped ROC
curves depicted in Figure \ref{roc for boot doctor}. We used
an 8:2 training set and test set partition. Note that there exist
categorical features in the data. For example, the mother's occupation is coded
as 1 for the farmer, 2 for the factory worker, etc. Neural networks are not good at
handling such structured data (\cite{guo2016entity}). To address this issue we
used an entities embedding method (\cite{guo2016entity}, \cite{yang2015embedding}) to map
these features to embedded vectors that can be directly utilized by a neural
network. We used a 5-layer fully-connected neural network for prediction, with
50, 50, 50, 50, 25 neurons per layer, and ReLU (Rectified Linear Unit) is used
as the activation function. A dropout method (\cite{srivastava2014dropout}) is
employed to provide regularization and in-model ensemble. Batch
Normalization (\cite{ioffe2015batch}) is used to accelerate network training and
Xavier's method (\cite{glorot2010understanding}) is used to initialization network
parameters.

\begin{figure}
\begin{center}
\caption{Bootstrapped ROC Curves for the Birth Defect Model.\small{\newline\newline Bold Blue line is for the original model, and other lines are bootstrapped ROC curves.}}\label{roc for boot doctor}
    \includegraphics[height=.32\textheight]{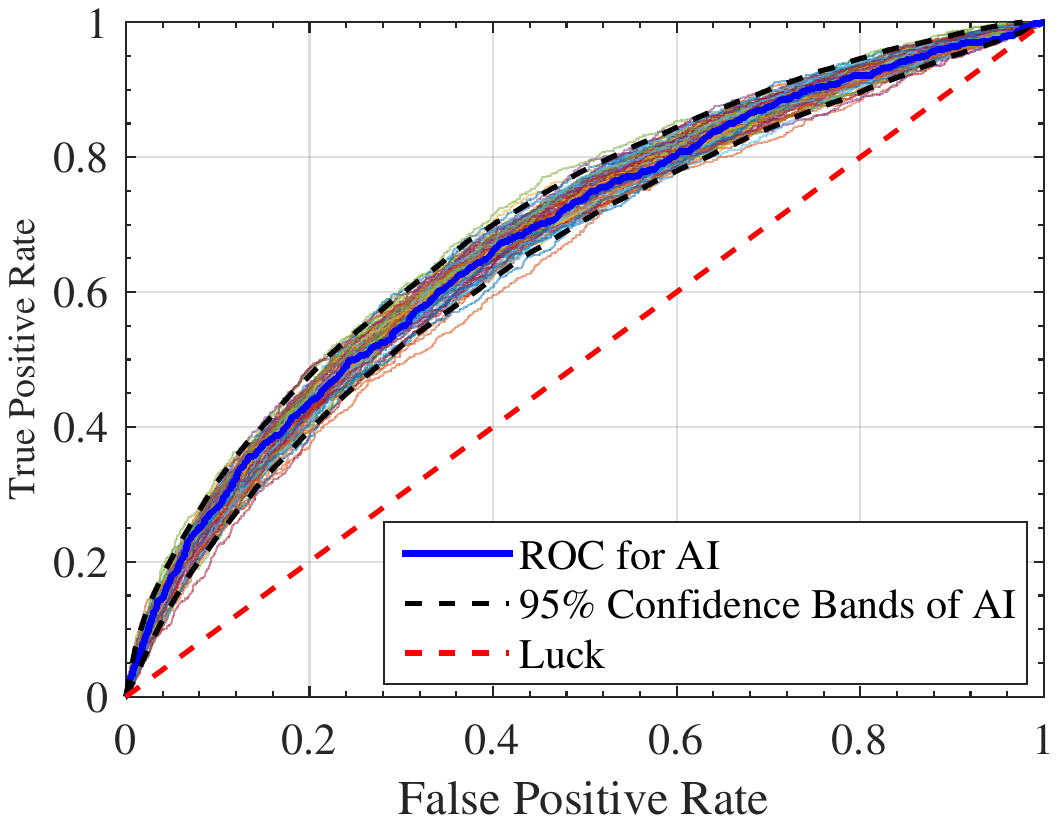}
\end{center}
\end{figure}

We then predict doctors' diagnoses, where the labels in both the training data and the test
data are the diagnosis by doctors, using the same deep learning algorithm as the one
used to fit the observed birth defect outcome. The difference is that the dependent variable
changes from birth defect outcomes to doctor's diagnoses. Figure \ref{roc for
predicting doctor} shows the ROC curve of this algorithm with an AUC of 0.828.
This procedure of training a model to predict the doctors' diagnosis and using
it to predict the same doctors' diagnosis is different from the PROC
curve discussed earlier where the model trained with the doctors' diagnosis is
used to predict the actual birth defect outcome.
The AUC of 0.828 for this ``predicted doctor's model'' is much higher than that
for both the PROC and the MROC. It comes with no surprise that it is much easier
to use features to predict the diagnosis by doctors, and to predict the actual
outcome of birth defects.

\begin{figure}
\begin{center}
\caption{Empirical ROC curve for the model predicting doctor's diagnoses}\label{roc for predicting doctor}
\includegraphics[height=.32\textheight]{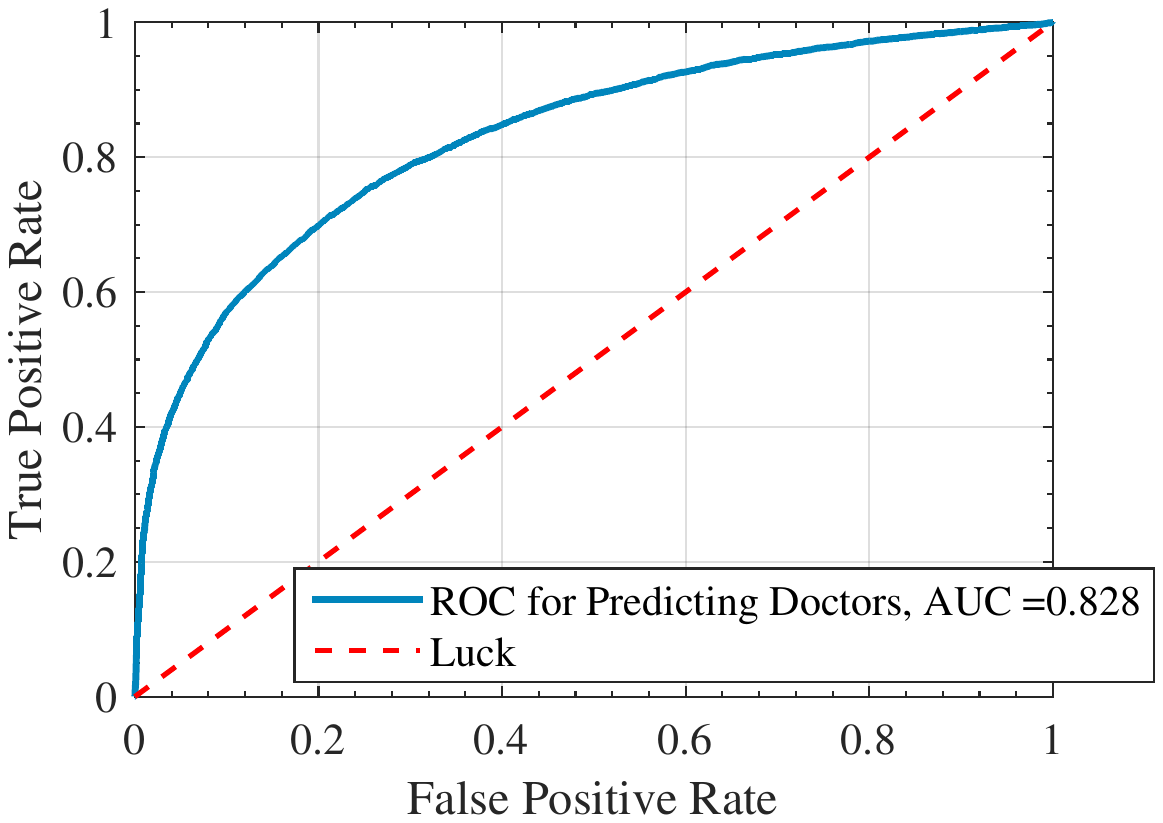}
\end{center}
\end{figure}

Figure \ref{roc for all} shows the MROC curve for the deep learning algorithm,
the PROC curve for the predicted doctor's model, and the FPR/TPR pair of average doctors.
The AUC for MROC of the the deep learning algorithm is 0.677. The aggregate
PTPR/PFPR pair of doctors lies significantly below the MROC. The PROC curve for
predicted doctor's model lies slightly above the aggregate doctors' TPR/FPR,
but below the machine learning ROC and has a lower AUC of about 0.533.
Note that it is possible for the predicted humans' model to outperform the
humans TPR/FPR pair, as evident in \cite{QJEbail}.
However, such an interpretation is subject to the concern due to Jensen's inequality.
Furthermore, the machine learning MROC and the ``predicted doctor'' PROC are two very different
curves that encode different types of information and are not directly comparable.

\begin{figure}
\begin{center}
\caption{Empirical ROC curve for machine learning, predicted doctors model and real aggravated doctors}\label{roc for all}
\includegraphics[height=.32\textheight]{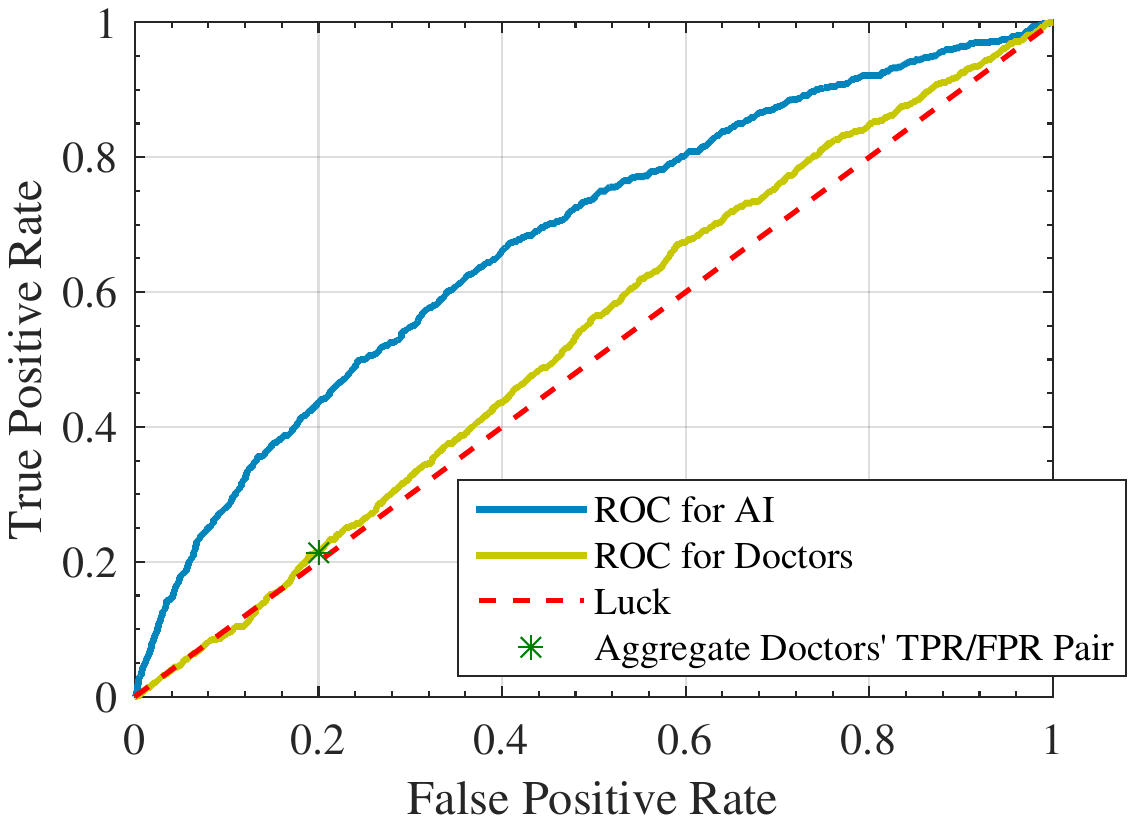}
\end{center}
\end{figure}

The $95\%$ confidence bands for the ROC curves of machines algorithm and
predicted doctors are plotted via a bootstrap method. Since our sample
size is large, the resulted bands are narrow. The AUCs of machine algorithm and
predicted doctors are, respectively, $0.677\pm0.012$ and $0.533\pm0.024$ ($95\%$
confidence bands). Given that the AUC of machine algorithm is much larger than
the predicted doctors, and both have small confidence bands. In the absence of
incentive heterogeneity in decision making, the data provides strong
statistical evidence in favor of the alternative hypothesis that the machine algorithm
is better than the predicted doctor model in forecasting the actual birth
defect outcomes.

If the doctors' diagnosis (which is observed in the data) are only based on
observed features $X$, it ought to be
learned by the machine algorithm with a high level of precision. Consequently, the ROC
for predicting doctors' diagnosis should be very close to the $\(0,1\)$ vertex
that represents perfect classification. However, as Figure \ref{roc for predicting doctor} shows, the ROC for predicting doctors is still far away from
perfect classification, in spite of the fact that doctors' decisions are much easier to learn
than the labeled actual outcomes.
This is evident of the presence of incentive or informational heterogeneity in
doctors' diagnosis beyond those encoded in the observed features $X$.

\subsection{Machines vs. Doctors}\label{Compare}

These findings from the data suggest provide potential evidence that both the
machine learning algorithm and the ``collective wisdom'' of doctors outperform
individual diagnosis, possibly because doctors misuse information.
However, such interpretations need to be viewed with much caution,
because essentially they
are based on the assumptions that
\begin{enumerate}
\item Doctors used a \emph{homogeneous} decision rule of the form of $\hat Y_i = \mathds{1}\(\hat q\(X_i\)>c\)$.
\item Doctors used the same or less information than the observed features $X$
to form an estimate of the propensity score $\hat q\(X_i\)$.
\end{enumerate}
where $c$ is a constant cutoff point employed by all doctors. Both of these are strong
assumptions that might be violated empirically. Possible alternative
explanations of these empirical results may be caused by the
\emph{heterogeneity} of decision rules, which might include:
\begin{enumerate}
	 \item Heterogeneous incentives (various cutoff values across doctors).
	 \item Cutoff values depending on observed features for individual doctors.
\end{enumerate}
We argue that incentive heterogeneity is likely to be present in an empirical data set. For example,
using a million observation electronic birth data from New Jersey, \cite{currie3}
find strong evidence of heterogeneous decision making by doctors.
\cite{lembke2012} suggests that online rating system tends to alter the
incentive of practicing doctors.
Some doctors care about online rating while some do not, creating incentive
heterogeneity across doctors.

On the one hand, the MROC from machine learning represents information encoded in the observed
features about the predictability of the outcome label. On the other hand, the aggregate TPR/FPR pair
in the data encodes not only hetergeneous information dispersed among decision makers
such as doctors, but also the heterogeneity in their incentives and preferences. Even if doctors
know the true propensity score and therefore have as much information
as the machine learning algorithm has, the aggregate PTPR/PFPR pair might still appear
to be inferior as long as doctors employ a loss function that
depends on the observed features. (Also refer to Section \ref{cutoff points}.) 
That the TPR/FPR pair lies below the MROC might be a mere reflection of preference heterogeneity among human decision
makers instead of inferior information, and by itself does not provide evidence against the
rationality and quality of human decision making.

\subsection{Incentive Heterogeneity}
We plotted the
individual FPR and TPR pairs for those doctors
who have diagnosed more than 100 patients in Figure \ref{figure RealDoctor100}.
It is evident that doctors' performance is highly
heterogeneous. Some doctors are particularly conservative, showing high FPR and TPR. Although
many doctors fall below the AI's ROC curve, a fraction of the doctors show very high TPR and
low FPR that lie above the machine learned ROC.
\begin{figure}
\begin{center}
\caption{The heterogeneity of Doctors}\label{figure RealDoctor100}
    \includegraphics[height=.32\textheight]{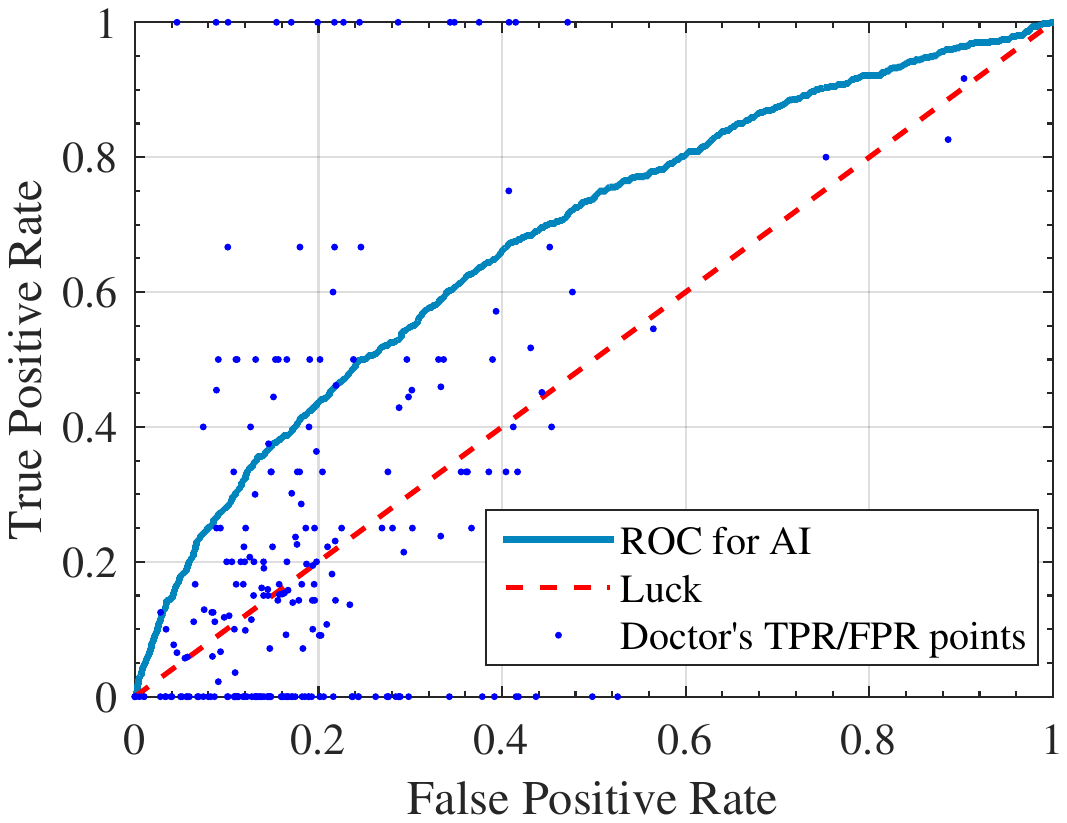}
\end{center}
\end{figure}

For each doctor, we obtain the machine propensity score $p\(x\)$ from machine
learning algorithm and doctors' propensity score $q\(x\)$ from the predicted
doctors' model.
We then run a regression using
polynomial regressors up to the power of 5 to obtain the relationship between
$p\(x\)$ and $q\(x\)$ (the $h$ function in section \ref{uncover}). To explicitly
account for heterogeneity in decision making, we examine two doctors, one capable
doctor with performance above the machine's ROC curve and one incapable doctor
with performance below the machine's ROC curve. These two doctors
see 1307 and 2254 patients, respectively, in our data set. Figure \ref{goodDoctor}
and \ref{badDoctor} show very different shapes for these two $h\(x\)$
functions, which is indicative of incentive heterogeneity among doctors.
\begin{figure}
\begin{center}
\caption{$p\(x\)$ vs. $q\(x\)$ for a capable doctor}\label{goodDoctor}
    \includegraphics[height=.32\textheight]{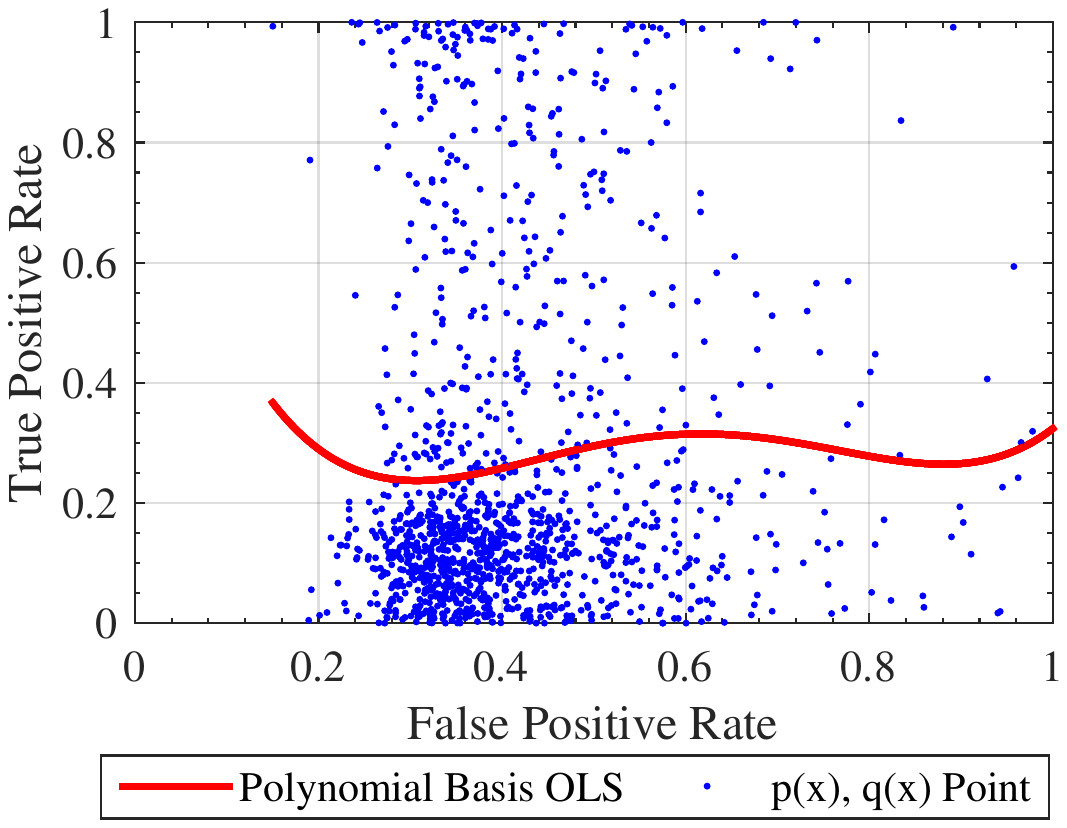}
\end{center}
\end{figure}

\begin{figure}
\begin{center}
\caption{$p\(x\)$ vs. $q\(x\)$ for an incapable doctor}\label{badDoctor}
    \includegraphics[height=.32\textheight]{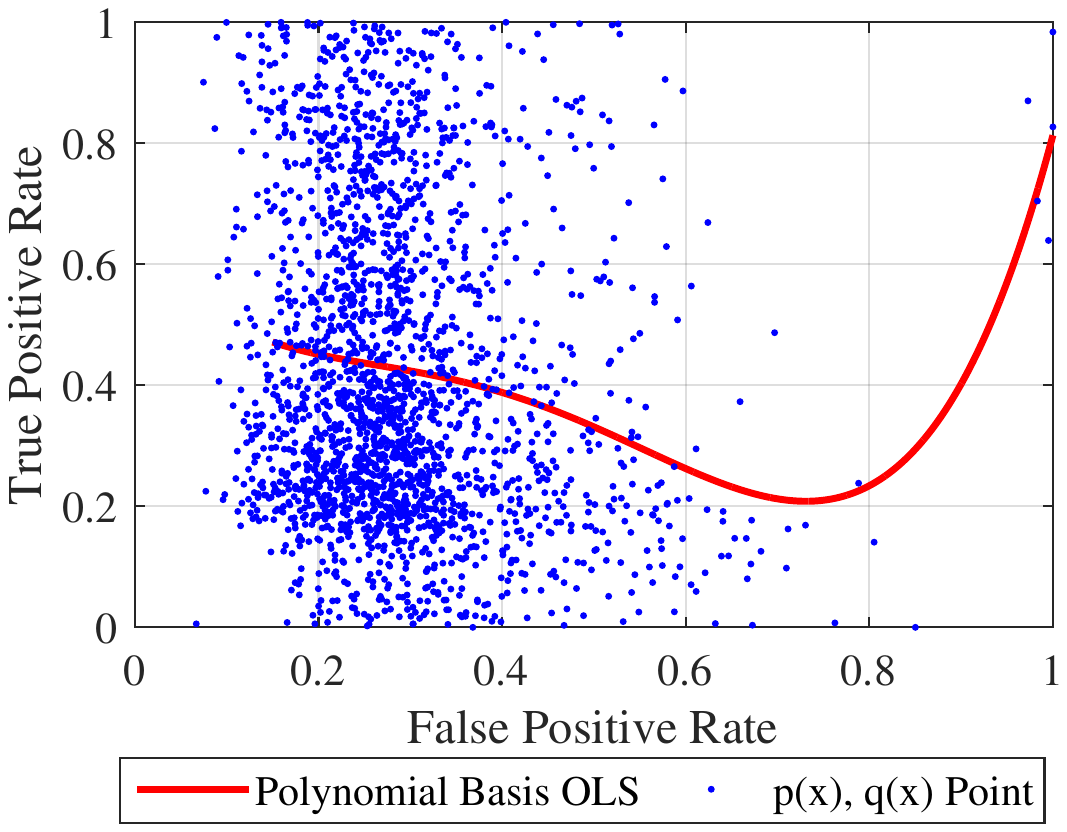}
\end{center}
\end{figure}
Furthermore, Figure \ref{goodDoctor} and \ref{badDoctor} show that neither
$h\(x\)$ function is monotonic, which implies the dependence of the cutoff
threshold on observed features (refer to section \ref{uncover}). We also obtain
the $h\(x\)$ function for all doctors, which is also a highly nonmonotonic
function. For brevity, we hence omit the corresponding figure.

To control for incentive heterogeneity we make use of the conditioning idea in
section \ref{cutoff points} when we
test the relative performance between humans and machines.
We use the characteristic called ``bad previous pregnancy outcomes'' as the conditioning
feature that the cutoff function depends on, where 0 stands for no bad previous outcomes and 1 for bad outcomes.
We first remove the set of doctors whose TPR/FPR pairs under perform AI (see Figure
\ref{roc for predicting doctor}).  There are 3597 observations of class 0 and 490 observations
of class 1 in the remaining data set.
We use machine learning to make predictions for these two samples separately.
The results are presented in the top two figures in Figure \ref{figure 14},
where the doctors' TPR/FPR pairs are on (slightly above) the corresponding AI's
ROC curves.  Then, we use AI to predict the combined data set. This time,
the doctors' point falls below the AI's ROC curve.
\begin{figure}
\begin{center}
\caption{Empirical ROC curves and doctors' performance conditional on the feature of "previous pregnancy outcomes"}\label{figure 14}
\subfigure{\includegraphics[height=.25\textheight]{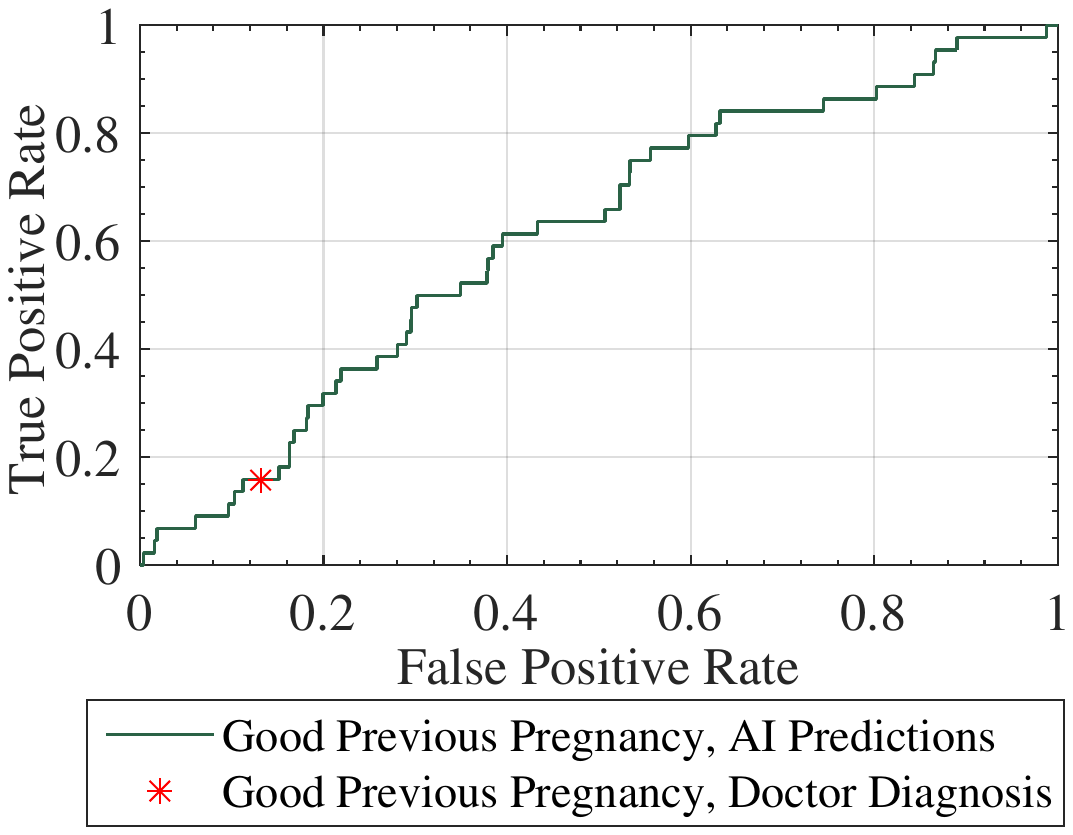}}
	\subfigure{\includegraphics[height=.25\textheight]{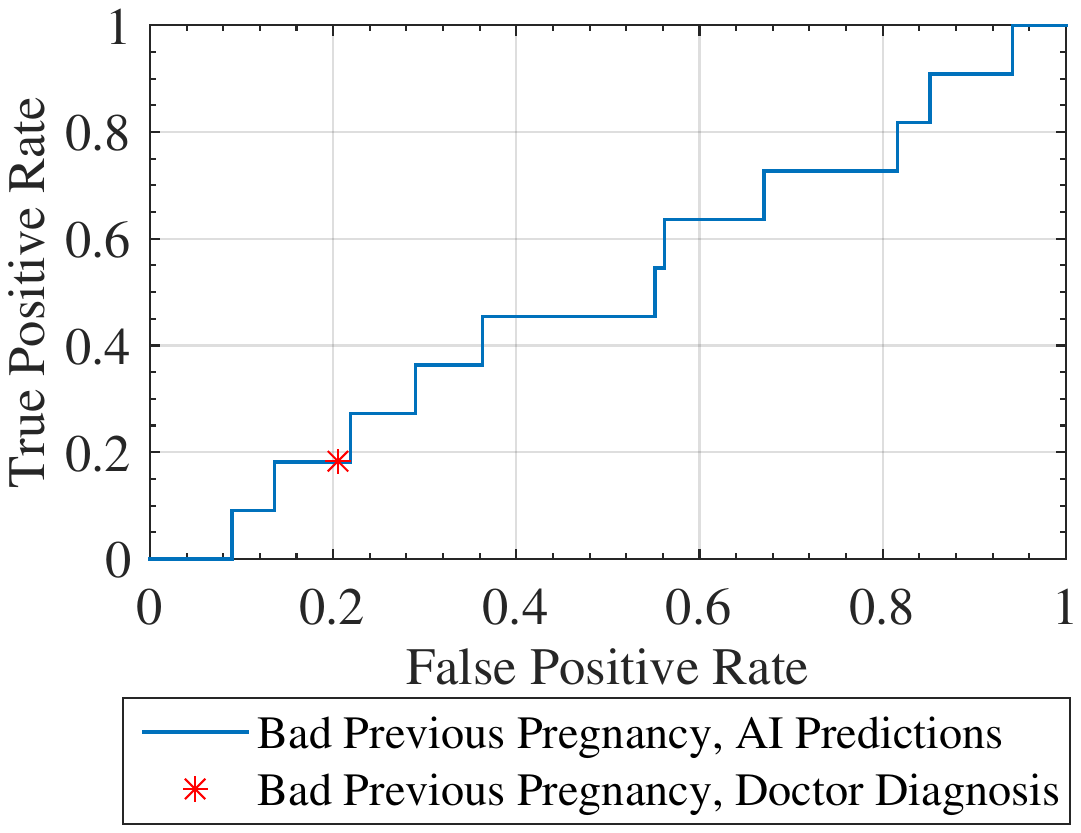}}\\
	\subfigure{\includegraphics[height=.25\textheight]{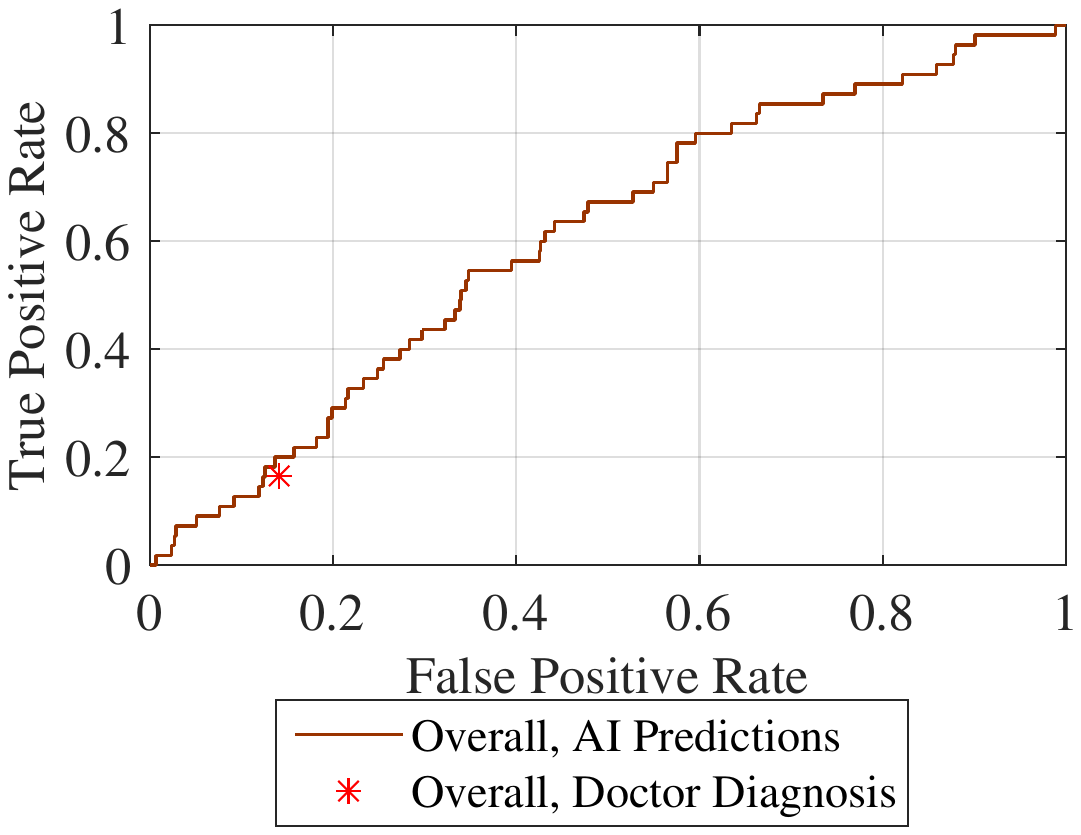}}
\end{center}
\end{figure}

As discussed in Section \ref{cutoff points}, Figure \ref{figure 14} suggests that doctors' cutoff
thresholds depend on ``bad previous pregnancy outcomes''. It also shows that even though
doctors are better than AI, their aggregate TPR/FPR pair can be below AI's ROC curve due
to incentive heterogeneity with the dependence of cutoff points on features.

\section{Conclusion}
This paper studies the statistical properties of the
Receiver Operating Characteristic (ROC) curve and its relation to binary classification and
human decision making. We derive results that are useful for the construction of
confidence bands for the ROC curve.
We also show that a maximum AUC estimator is essentially the maximum
score estimator from the econometrics literature, and discuss its
implications for model selection and testing in binary classification
problems. We use the Pre-Pregnancy Checkups of
reproductive age couples in Henan Province provided by the Chinese Ministry of
Health as an illustrative example for our theoretical model on ROC curves.

The advent of machine learning and artificial
intelligence offers the potential of improving human decision making. A natural
question to ask is whether machine learning can replace human decision makers.
We conclude that the most commonly used performance measure, ROC curves, generated by a machine learning algorithm relative to
a human decision maker does not translate into a statement about the superiority
of the machine learning algorithm. We provide evidences in which
human decision makers are fully rational yet appear to lie below the machine
learned ROC curve. This is because ROC curve only provides a type of summary of
information, yet decision making involves both information and incentives.
These findings serve to clarify the misconception about ROC curves that appear in
previous research papers.

\section{Acknowledgements}
We thank Xiaohong Chen, Andres Santos, 
and participants at various seminars and conferences for insightful comments.
This study was approved by the National Health and Family Planning Commission. Informed consents were obtained from all the NFPC participants. We acknowledge funding support from the National Science Foundation (SES 1658950 to Han Hong), the National Science Fund for Distinguished Young Scholars of China (71325007 to Ke Tang), the State's Key Project of Research and Development Plan (2016YFC1000307 to Jingyuan Wang) and the National Natural Science Foundation of China (61572059 to Jingyuan Wang). 

\appendix
\section{Appendix}

\begin{lemma}\label{concavity of roc lemma}
Denote $f_p\(u\)$ the implied
density of the propensity score $p\(x\)$ induced by $f_X\(x\)$. Let
$f_p\(u\)$ be positive and continuous on $u\in \[0,1\]$, then the ROC
curve corresponding to a correctly specified $p\(x\)$ is a concave function.
\end{lemma}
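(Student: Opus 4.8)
The plan is to parametrize the ROC curve by the threshold $c$ and to exploit the fact that its slope admits a remarkably simple closed form once the density $f_p$ cancels. First I would rewrite the power and size curves as one-dimensional integrals against the implied density $f_p$ of $p(X)$, namely
\begin{align}\begin{split}\nonumber
\beta(c) = \frac{1}{p}\int_c^1 u\, f_p(u)\,\mathrm{d}u, \qquad
\alpha(c) = \frac{1}{1-p}\int_c^1 (1-u)\, f_p(u)\,\mathrm{d}u,
\end{split}\end{align}
which follows from the change of variables $u = p(X)$ in the definitions of $\beta(c)$ and $\alpha(c)$ given in the text, with $p = \int_0^1 u\, f_p(u)\,\mathrm{d}u$. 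Positivity and continuity of $f_p$ on $[0,1]$ guarantee that both $\alpha$ and $\beta$ are continuously differentiable and strictly decreasing, so that $\alpha(\cdot)$ is invertible and $\beta$ can legitimately be regarded as a function of $\alpha$ (tracing the curve from $(\alpha,\beta)=(1,1)$ at $c=0$ to $(0,0)$ at $c=1$).

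Next I would differentiate under the integral sign (Leibniz rule) to obtain
\begin{align}\begin{split}\nonumber
\beta'(c) = -\frac{1}{p}\, c\, f_p(c), \qquad
\alpha'(c) = -\frac{1}{1-p}\,(1-c)\, f_p(c).
\end{split}\end{align}
The crucial observation is that the factor $f_p(c)$ cancels in the ratio, so the slope of the ROC curve is the clean expression
\begin{align}\begin{split}\nonumber
\frac{\mathrm{d}\beta}{\mathrm{d}\alpha} = \frac{\beta'(c)}{\alpha'(c)} = \frac{1-p}{p}\cdot\frac{c}{1-c}.
\end{split}\end{align}

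To conclude concavity I would argue by composition of monotonicities: the slope is strictly increasing in $c$, since $c/(1-c)$ is increasing on $[0,1)$, while $\alpha(c)$ is strictly decreasing in $c$ because $\alpha'(c)<0$. Hence $\mathrm{d}\beta/\mathrm{d}\alpha$ is a strictly decreasing function of $\alpha$, which is precisely the defining property of a concave function. Equivalently, one may compute the second derivative directly via the chain rule, $\frac{\mathrm{d}^2\beta}{\mathrm{d}\alpha^2} = \frac{\mathrm{d}}{\mathrm{d}c}\!\left(\frac{\mathrm{d}\beta}{\mathrm{d}\alpha}\right)\frac{\mathrm{d}c}{\mathrm{d}\alpha}$, where the first factor equals $\frac{1-p}{p}\,(1-c)^{-2}>0$ and the second factor $1/\alpha'(c)<0$, yielding a strictly negative second derivative.

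The main obstacle is one of care rather than depth: I must justify the passage from the $c$-parametrization to $\beta$ as a genuine, differentiable function of $\alpha$, and this is exactly where the hypotheses enter. Positivity of $f_p$ ensures $\alpha'(c)\neq 0$, so the inverse function $\alpha^{-1}$ is differentiable and the slope ratio is well defined, while continuity of $f_p$ ensures the pointwise derivatives above exist. I would emphasize that the cancellation of $f_p(c)$ in the slope formula is what makes the result \emph{distribution-free} in the shape of the feature density, depending only on the correct specification $p(x,\theta_0)=p(x)$ through the identity $\mathrm{d}\beta/\mathrm{d}\alpha = \tfrac{1-p}{p}\cdot\tfrac{c}{1-c}$.
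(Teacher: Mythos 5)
Your proof is correct and follows essentially the same route as the paper's: change of variables to the density $f_p$, differentiation of $\beta(c)$ and $\alpha(c)$, cancellation of $f_p(c)$ in the slope, and concavity from the slope being increasing in $c$ while $\alpha$ is decreasing in $c$. One small point in your favor: your slope formula $\frac{\mathrm{d}\beta}{\mathrm{d}\alpha} = \frac{1-p}{p}\cdot\frac{c}{1-c}$ correctly retains the constant factor $\frac{1-p}{p}$ that the paper drops (harmlessly, since a positive constant does not affect the monotonicity argument), and your explicit justification of the invertibility of $\alpha(\cdot)$ makes the passage from the $c$-parametrization to $\beta$ as a function of $\alpha$ more rigorous than the paper's version.
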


\begin{proof}
We can write, where both size and power are now indexed by $c$:
 {\begin{align}\begin{split}\nonumber
	\beta\(c\) =& \frac{1}{p}\int \mathds{1}\(p\(X\) > c\) p\(X\) f\(X\) \mathrm{d}X = \frac{1}{p}\int_c^\infty u
	f_P\(u\) \mathrm{d}u
 \end{split}\end{align}}
and
 {\begin{align}\begin{split}\nonumber
	\alpha\(c\)=\frac{1}{1 - p}\int \mathds{1}\(p\(X\) > c\) \(1-p\(X\)\) f\(X\)\mathrm{d}X =
	\frac{1}{1 - p}\int_c^\infty \(1-u\)
	f_p\(u\) \mathrm{d}u.
 \end{split}\end{align}}
Hence
 {\begin{align}\begin{split}\nonumber
	\frac{\mathrm{d}\beta}{\mathrm{d} c} =& -c \frac{f_p\(c\)}{p}, \quad \frac{\mathrm{d}\alpha}{\mathrm{d}c} =
	-\(1-c\) \frac{f_p\(c\)}{1-p},\quad
	 \frac{\mathrm{d}\beta}{\mathrm{d}\alpha} = \frac{c}{1-c},
 \end{split}\end{align}}
which is strictly increasing in $c$, and decreases when $c$
decreases from $1$ (at the origin of $\(\alpha,\beta\)=\(0,0\)$) to $0$ (at the
$\(\alpha,\beta\)=(1,1)$ vertex).
\end{proof}

\begin{lemma}\label{ROCHetero}
In the absence of information heterogeneity, if the cutoff threshold $c$ varies among decision
makers or within a single decision maker (incentive heterogeneity), the aggregate
PTPR/PFPR pair of decision makers is below the optimal ROC curve.
\end{lemma}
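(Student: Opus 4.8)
The plan is to separate the two sources of incentive heterogeneity named in the statement — variation of the cutoff \emph{across} decision makers, and variation of the cutoff \emph{within} a single decision maker — and to treat both through one convexity argument built on Lemma \ref{concavity of roc lemma}. The organizing observation is that the collection of all attainable $(\mathrm{PFPR},\mathrm{PTPR})$ pairs generated by decision rules based on $X$ is contained in the hypograph of the optimal ROC: by the Neyman--Pearson optimality of Section 2.1, every rejection region $R$ produces a pair lying weakly below the curve $\beta = f(\alpha)$ traced out by the threshold rules $\mathds{1}(p(X)>c)$, and by Lemma \ref{concavity of roc lemma} that curve is concave, so the set $\{(\alpha,\beta):\beta\le f(\alpha)\}$ is convex. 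Moreover, as computed in the proof of Lemma \ref{concavity of roc lemma}, the slope $d\beta/d\alpha = c/(1-c)$ is strictly increasing in $c$, so under the assumption $f_p>0$ the ROC is in fact \emph{strictly} concave on $(0,1)$.

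First I would dispose of the cross-decision-maker case. Absent information heterogeneity each decision maker knows the true $p(X)$, so a decision maker $j$ using a constant cutoff $c_j$ lands exactly on the optimal ROC at $(\alpha_j,\beta_j)$ with $\beta_j=f(\alpha_j)$. Under random assignment of observations to decision makers (assignment independent of $(X,Y)$), the pooled rates reduce to a single convex combination with common weights, $\bar\alpha=\sum_j \pi_j\alpha_j$ and $\bar\beta=\sum_j \pi_j\beta_j$. Strict concavity then yields $\bar\beta=\sum_j\pi_j f(\alpha_j)<f(\bar\alpha)$ by Jensen's inequality whenever at least two distinct cutoffs are actually used, exactly as in the display in Section \ref{introduction}.

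Next I would treat the within-decision-maker case, where the cutoff depends on features, $c(X)$. Such a rule is $\hat Y=\mathds{1}(p(X)>c(X))$, that is, a rejection region $R=\{x:p(x)>c(x)\}$ which is in general not a level set of $p$. By Neyman--Pearson this region's pair lies weakly below the optimal ROC, and strictly below unless $R$ agrees up to a null set with some threshold region $\{p(X)>c\}$; equivalently, one may view the feature-dependent cutoff as a randomized classifier $\phi(X,U)$ with $U\perp Y\vert X$ and invoke the remark that randomization only dilutes power. Either way each individual pair lies in the (convex) hypograph of the ROC.

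Finally I would combine the two cases: the aggregate pair is a convex combination of individual pairs, each lying in the convex hypograph of the concave optimal ROC, hence the aggregate lies in the hypograph as well, i.e. weakly below the ROC. For strictness, note that a convex combination of points lying weakly below a strictly concave curve can touch the curve only if every point carrying positive weight already lies on the curve at one common location — precisely the degenerate case of no heterogeneity — so genuine heterogeneity forces the aggregate strictly below. I expect the main obstacle to be the unification step rather than any single calculation: recognizing that the across-agent mechanism (concavity plus Jensen) and the within-agent mechanism (Neyman--Pearson for a non-threshold region) are two instances of membership in the same convex set, and pinning down the exact genericity condition separating the strict from the merely weak inequality.
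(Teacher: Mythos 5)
Your proposal is sound, but it is organized quite differently from the paper's own proof. The paper does not decompose heterogeneity into across-agent and within-agent cases at all, and in fact never invokes Lemma \ref{concavity of roc lemma}, Jensen's inequality, or convexity of the hypograph inside this proof. Instead it unifies everything in one step: the aggregate pair is written as the pair generated by the \emph{fractional} rule $\lambda\(x\)=\int \mathds{1}\(p\(x\)>h\(x,v\)\)f\(v\)\,\mathrm{d}v$, where $h\(x,v\)$ absorbs both feature dependence and all randomness across or within decision makers; under the ``classification ability'' assumption $0<\lambda\(x\)<1$, the paper picks $c^*$ with $\alpha\(c^*\)=\text{PFPR}$ and applies a strict Neyman--Pearson separation — the threshold rule $\mathds{1}\(p\(x\)>c^*\)$, and not $\lambda$, maximizes the linear functional $\eta_1\,\text{PTPR}-\eta_2\,\text{PFPR}$ — to get $\beta\(c^*\)>\text{PTPR}$ at matched FPR. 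What the paper's route buys is exactly the unification you flag as your main obstacle: no case distinction, no random-assignment bookkeeping for common convex weights (that is absorbed into $f\(v\)$), and strictness from a single interiority condition. What your route buys is transparency about the two mechanisms (Jensen across agents, Neyman--Pearson for non-threshold regions within an agent), consistency with the Introduction's discussion, and a sharper view of the degenerate cases: you correctly note that a feature-dependent cutoff $c\(X\)$ can still generate a level set of $p$ (hence a point \emph{on} the ROC), a case the lemma's literal statement does not exclude; interestingly, the paper's condition $0<\lambda\(x\)<1$ silently rules out any single deterministic rule (for which $\lambda\in\{0,1\}$), so neither proof's non-degeneracy condition subsumes the other's. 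Your one genuine loose end — ``pinning down the exact genericity condition'' — is thus not resolved by the paper either; it is assumed away.
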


\begin{proof}
To prove this, we write the pair as
{\begin{align}\begin{split}\nonumber
\text{PFPR} =&\frac{1}{1-p} \int \lambda\(x\) \(1-p\(x\)\) f\(x\) \mathrm{d}x,\\
\text{PTPR} =&\frac{1}{p} \int \lambda\(x\) p\(x\) f\(x\) \mathrm{d}x,
\end{split}\end{align}
}
where $\lambda\(x\)=\int \mathds{1}\(p\(x\)>h\(x,v\)\)f\(v\)\mathrm{d}v$, note that the cutoff variable $h\(x,v\)$ depends on $x$ and a random variable $v$. We assume that the decision rule $\hat Y=\mathds{1}\(p\(x\)>h\(x,v\)\)$ has some classification ability, i.e. $\mathds{1}\(p\(x\)>h\(x,v\)\)\not\equiv 0$ and $\mathds{1}\(p\(x\)>h\(x,v\)\)\not\equiv 1$, therefore, $0<\lambda\(x\)<1$.
Then for a $c^*$ satisfying
		 {\begin{align}
			 \alpha\(c^*\) =& \int \mathds{1}\(p\(x\) > c^*\) \(1-p\(x\)\) f\(x\) \mathrm{d}x
 =
\text{PFPR} = \int \lambda\(x\) \(1-p\(x\)\) f\(x\) \mathrm{d}x
		 \end{align}
		 }
on the optimal ROC curve, given the FPR $\alpha\(c^*\)$, we can find the TPR $\beta\(c^*\)$:
{\begin{align}
			 \beta\(c^*\) =& \int \mathds{1}\(p\(x\) > c^*\) p\(x\) f\(x\) \mathrm{d}x.
	 \end{align}}
Since $\alpha\(c^*\)$ and $\beta\(c^*\)$ are on the optimal ROC curve, by the Neyman-Pearson Lemma,
there must be some positive $\eta_1$ and $\eta_2$ such that $\mathds{1}\(p\(x\) > c^*\)$ (but not $\lambda\(x\)$) solves
{\begin{align}
\arg\max_{\phi\(\cdot\)}
\eta_1  \int \phi\(x\) p\(x\) f\(x\) \mathrm{d}x
- \eta_2  \int \phi\(x\) \(1-p\(x\)\) f\(x\) \mathrm{d}x.
\end{align}
}
Hence $\eta_1 \beta\(c^*\) - \eta_2 \alpha\(c^*\) > \eta_1 \text{PTPR}
- \eta_2 \text{PFPR}$, and thus $\beta\(c^*\) > \text{PTPR}$.
\end{proof}

\begin{lemma}\label{same ROC if and only if}
Let $p\(X\)$ and $q\(X\)$ be continuously distributed with a positive density on $\[0,1\]$,
and $p\(X\)$ be correctly specified. These two propensity score
functions $p\(X\)$ and $q\(X\)$ correspond to the same
ROC if and if only if one is a strict monotonic transformation of the other.
\end{lemma}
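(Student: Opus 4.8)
The statement has two directions, and the reverse implication is essentially the observation recorded just before the lemma: if $q(x) = g(p(x))$ with $g$ strictly increasing, then $\{x : q(x) > t\} = \{x : p(x) > g^{-1}(t)\}$, so every rejection region of the $q$-classifier equals a rejection region of the $p$-classifier, and the two families trace out the same $(\mathrm{PFPR}, \mathrm{PTPR})$ points. I would only add that, because $p(X)$ and $q(X)$ have positive densities on $[0,1]$, both thresholdings sweep the FPR continuously over all of $[0,1]$, so the two point sets coincide as full curves rather than merely as subsets.

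The substance is the forward implication: if the two ROCs coincide, then $q$ must be a strictly increasing transformation of $p$. My plan rests on a uniqueness property of the optimal ROC. Since $p(X)$ is correctly specified, by the Neyman--Pearson argument its ROC is the optimal (uppermost) one, and by Lemma~\ref{concavity of roc lemma} the positivity and continuity of $f_p$ make it strictly concave, its slope $c/(1-c)$ being strictly increasing in the threshold $c$. Strict concavity means each point on this ROC is the unique maximizer of a supporting linear functional $\phi\,\mathrm{PTPR} - \eta\,\mathrm{PFPR}$; by the Neyman--Pearson lemma the maximizing region is the likelihood-ratio region $\{p(x) > c\}$, unique up to the boundary set $\{p(x) = c\}$, which has measure zero because $f_p$ has no atoms. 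Thus every point of the optimal ROC is achieved by one and only one rejection region, modulo null sets.

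With this in hand I would argue as follows. Fix any threshold $t$ for $q$, and write $(\alpha_q(t), \beta_q(t))$ for the $(\mathrm{PFPR}, \mathrm{PTPR})$ pair of the region $\{q(x) > t\}$. Because the $q$-ROC coincides with the optimal $p$-ROC, this pair lies on the optimal ROC and is achieved by $\{q(x) > t\}$; by the uniqueness just established, $\{q(x) > t\} = \{p(x) > c(t)\}$ up to a null set, for some level $c(t)$. As $t$ increases the superlevel sets $\{q > t\}$ shrink, and matching them to the nested superlevel sets $\{p > c(t)\}$ forces $c(\cdot)$ to be nondecreasing; the positive densities of $p(X)$ and $q(X)$ then promote this to strictly increasing and continuous. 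Finally, the identity $\{q > t\} = \{p > c(t)\}$ for every $t$ forces $q$ to be constant on level sets of $p$: if $p(x_1) = p(x_2)$ but $q(x_1) < q(x_2)$, choosing $t$ strictly between would place $x_2$ but not $x_1$ in $\{q > t\} = \{p > c(t)\}$, contradicting $p(x_1) = p(x_2)$. Hence $q = \psi(p)$ almost everywhere, and the strict monotonicity of $c(\cdot)$ makes $\psi$ strictly increasing.

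The main obstacle I anticipate is the measure-zero bookkeeping in the uniqueness step and, relatedly, excluding weakly monotone relationships with flat segments. I must use the positive-density assumption to show that any coarsening of $p$ by $q$, that is, a $\psi$ constant on an interval of $p$-values, would replace an arc of the strictly concave optimal ROC by a straight chord or leave a gap in the achievable points, which cannot coincide with the strictly concave common curve. Making the ``up to null sets'' identifications precise and transferring them into the pointwise statement $q = \psi(p)$ is where the care is needed; the remaining monotonicity and continuity claims then follow from standard properties of survival functions of atomless distributions.
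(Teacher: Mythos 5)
Your proposal is correct and follows essentially the same route as the paper: the paper likewise disposes of the \emph{if} direction by the threshold-remapping observation, and for the \emph{only if} direction matches each point of the $q$-ROC to a point of the optimal $p$-ROC, invokes the Neyman--Pearson uniqueness of the maximizer of the supporting linear functional $\phi\,\mathrm{PTPR}-\eta\,\mathrm{PFPR}$ (the paper phrases this by showing the randomized test $h(u,t)=\mathbb{E}\left[\mathds{1}\left(q(X)\geq t\right)\mid p(X)=u\right]$ must equal the indicator $\mathds{1}\left(u\geq\bar c(t)\right)$, which is your region-uniqueness step projected onto the scalar $u=p(x)$), and then uses nesting of superlevel sets to get strict monotonicity of $\bar c(\cdot)$ and conclude $q=\bar c^{-1}(p)$. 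The almost-sure versus pointwise bookkeeping you flag is glossed at the same level of rigor in the paper's own proof, so no substantive gap separates the two arguments.
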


\begin{proof}
The {\it if} part is immediate. Any point on the ROC of $p\(X\)$ maps into
another point on th ROC of $q\(X\)$. Now consider the {\it only if} part.
First we show that there exists a function $\bar c\(t\)$, such that for all $t$
\begin{align}\begin{split}\nonumber
\mathds{1}\(q\(X\) \geq t\) = \mathds{1}\(p\(X\) \geq \bar c\(t\)\)
\end{split}\end{align}
By assumption of the identical ROCs, $\bar c\(t\)$ exists such that,
\begin{align}\begin{split}\nonumber
\int \mathds{1}\(q\(x\) \geq t\) p\(x\) f\(x\) dx =&
\int \mathds{1}\(p\(x\) \geq \bar c\(t\)\) p\(x\) f\(x\) dx, \\
\int \mathds{1}\(q\(x\) \geq t\) \(1-p\(x\)\) f\(x\) dx =&
\int \mathds{1}\(p\(x\) \geq \bar c\(t\)\) \(1-p\(x\)\) f\(x\) dx.
\end{split}\end{align}
Note that the left hand sides are decreasing in $t$, and the right hand sides
are decreasing in $\bar c\(t\)$, $\bar c\(t\)$ is necessarily an increasing
function of $t$.
To simplify, for $f\(u,z\)$ denoting the implied joint density of $\(p\(X\),
q\(X\)\)$, we can write
\begin{align}\begin{split}\nonumber
\iint \mathds{1}\(z \geq t\) u f\(u,z\) \mathrm{d}u \mathrm{d}z  =& \iint
\mathds{1}\(u \geq \bar c\(t\)\) u f\(u,z\) \mathrm{d}u\\
\iint \mathds{1}\(z \geq t\)  f\(u,z\) \mathrm{d}u \mathrm{d}z  =& \iint
\mathds{1}\(u \geq \bar c\(t\)\)  f\(u,z\) \mathrm{d}u
\mathrm{d}z
\end{split}\end{align}
which can be written as
\begin{align}\begin{split}\label{prepare for neyman-pearson argument}
\int \mathbb{E}\[\mathds{1}\(z \geq t\) \vert u\] u f\(u\) \mathrm{d}u =& \int
\mathds{1}\(u \geq \bar c\(t\)\) u f\(u\) \mathrm{d}u\\
\int \mathbb{E}\[\mathds{1}\(z \geq t\) \vert u\] \(1-u\) f\(u\) \mathrm{d}u =& \int
\mathds{1}\(u \geq \bar c\(t\)\) \(1-u\) f\(u\) \mathrm{d}u.
\end{split}\end{align}
By Neyman-Pearson Lemma's argument,
\begin{align}\begin{split}\label{z and u are the same}
\mathbb{E}\[\mathds{1}\(z \geq t\) \vert u\] = \mathds{1}\(u\geq \bar c\(t\)\)
\end{split}\end{align}
which implies that $\mathds{1}\(z \geq t\) = \mathds{1}\(u\geq \bar c\(t\)\)$.
To see \eqref{z and u are the same}, let $h\(u,t\) \equiv
\mathbb{E}\[\mathds{1}\(z \geq t\) \vert u\]$. Take a linear combination of
the two equalities in \eqref{prepare for neyman-pearson argument} using
$\(1-\bar c\(t\)\)$ and $-\bar c\(t\)$,
\begin{align}\begin{split}\nonumber
\int h\(u,t\) \[
\(1-\bar c\(t\)\) u - \bar c\(t\) \(1-u\)
\] f\(u\) du.
\end{split}\end{align}
The function $h\(u,t\)$ that maximizes the above integral is obviously
$h\(u,t\) = 1\(u \geq c\(t\)\)$, which appears on the right hand sides of
 \eqref{prepare for neyman-pearson argument}. Therefore in order for both
inequalities in \eqref{prepare for neyman-pearson argument} to hold, it must be
the case that $h\(u,t\)\equiv \mathbb{E}\[\mathds{1}\(z \geq t\) \vert u\] =
1\(u \geq c\(t\)\)$.

Next we verify again that $\bar c\(t\)$ is strictly monotonic.
Define $X_p\(c\) = \{x: p\(x\) \geq c\}$, and $X_q\(t\) = \{x: q\(x\) \geq t\}$.
By definition,
\begin{align}\begin{split}\nonumber
X_p\(c_2\)  \subset X_p\(c_1\) \quad \text{iff}\quad c_1 < c_2,\quad
X_q\(t_2\)  \subset X_q\(t_1\) \quad \text{iff}\quad t_1 < t_2.
\end{split}\end{align}
Then for $X_q\(t\)  = X_p\(\bar c\(t\)\)$, it must be that $\bar c\(t_1\) > \bar
c\(t_2\)$ if $t_1 > t_2$. Then an inverse function $\bar c^{-1}\(\cdot\)$
exists, and for all $t$,
\begin{align}\begin{split}\nonumber
X_q\(t\) = X_p\(\bar c\(t\)\) = \{x: \bar c^{-1}\(p\(x\)\) \geq t\}.
\end{split}\end{align}
So that $q\(x\) = \bar c^{-1}\(p\(x\)\)$.
\end{proof}

\begin{lemma}\label{ShrinkSingleton}
In the absence of information and incentive heterogeneity beyond the observed
features, the predicted doctor ROC degenerates to a singleton.
\end{lemma}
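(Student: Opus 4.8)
The plan is to show that, under the stated homogeneity, the machine-learned human propensity score $q(\cdot)$ is binary-valued, and that this collapses the entire family of threshold rules $\hat Y = \mathds{1}(q(X) > c)$ onto a single classification rule, hence onto a single false-positive/true-positive point. I would therefore reduce the whole statement to the elementary fact that $q(\cdot)$ can take only the values $0$ and $1$.

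First I would make the homogeneity precise in the form used in the surrounding text: with neither information nor incentive heterogeneity beyond $X$, there is a fixed function $\bar p(\cdot)$ and a common cutoff $c_0$ such that $D = \mathds{1}(\bar p(X) \geq c_0)$. The crucial observation is that $D$ is then a \emph{deterministic} function of $X$, so the conditional probability learned by the machine satisfies
\begin{align}\begin{split}\nonumber
q(X) = \mathbb{P}(D = 1 \vert X) = \mathds{1}(\bar p(X) \geq c_0) \in \{0, 1\}.
\end{split}\end{align}
In other words, $q(\cdot)$ does not range over $[0,1]$ but takes only the two values $0$ and $1$.

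Next I would substitute this $q(\cdot)$ into the definition of the predicted-doctor ROC, which thresholds $q(X)$ against $c$. Because $q(X) \in \{0,1\}$, for every $c \in (0,1)$ we have $\mathds{1}(q(X) > c) = \mathds{1}(q(X) = 1) = D$, so the predicted label coincides with the human decision $D$ across the entire open range of thresholds; consequently both PTPR and PFPR are constant in $c$ and equal the aggregate human TPR/FPR pair computed from $(Y, D)$. The only remaining thresholds are degenerate: $c \le 0$ yields $\hat Y \equiv 1$ (the vertex $(1,1)$) and $c \ge 1$ yields $\hat Y \equiv 0$ (the vertex $(0,0)$). Thus the PROC reduces to the two segments joining $(0,0)$, the single human pair, and $(1,1)$, so its only discriminating point is a singleton equal to the aggregate human TPR/FPR pair.

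Because the argument is essentially definitional, I do not anticipate a genuine analytical obstacle; the one point requiring care is the interpretation of ``degenerates to a singleton.'' By the usual convention the ROC always contains the corner vertices $(0,0)$ and $(1,1)$, so the precise claim is that the \emph{interior}, discriminating part of the curve collapses to one point, namely the aggregate human pair. To connect with the preceding discussion I would also remark, via Lemma~\ref{same ROC if and only if}, that this singleton lies on the MROC exactly when $\bar p(\cdot)$ is a strictly increasing transformation of the true $p(\cdot)$ and strictly below it otherwise; but this is commentary rather than part of establishing the degeneracy itself.
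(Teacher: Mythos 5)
Your proof is correct and takes essentially the same approach as the paper's: both rest on the observation that, absent heterogeneity beyond $X$, the learned propensity $q(X)=\mathds{1}\(\bar p\(X\) \geq c_0\)$ is binary-valued, so thresholding it at any $c\in\(0,1\)$ reproduces the same classification rule $D$ and hence a single TPR/FPR point, making PTPR and PFPR constant in $c$. Your explicit handling of the degenerate thresholds $c\le 0$, $c\ge 1$ and of the corner-vertex convention is a minor refinement that the paper leaves implicit.
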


\begin{proof}
To see this, the predicted human is a step function that takes only values $0$ and $1$:
if $D = 1\(\bar p\(X\) \geq c_0\)$, then
{\begin{align}\begin{split}\nonumber
q\(X\)\equiv \mathbb{P}\(D=1\vert X\) =\mathbb{P}\(\bar p\(X\) > c_0 \vert X\)= \mathds{1}\(\bar p\(X\) > c_0\),
\end{split}\end{align}}
The resulting ROC curve for the predicted human is given by
{
\begin{align}\begin{split}\nonumber
\bar\beta\(c\)
=& \frac1p \int \mathds{1}\(1\(\bar p\(x\) > c_0\) > c\) p\(x\) f\(x\) \mathrm{d}x \\
\bar\alpha\(c\)
=& \frac{1}{1-p} \int \mathds{1}\(1\(\bar p\(x\) > c_0\) > c\) \(1-p\(x\)\) f\(x\) \mathrm{d}x.
\end{split}\end{align}
}
		 Note that 
{\begin{align}\begin{split}\nonumber
	\bar\beta\(c\)
	=& \frac1p \int \mathds{1}\(\bar p\(x\) > c_0\) p\(x\) f\(x\) \mathrm{d}x,\\
	\bar\alpha\(c\) =&
\frac{1}{1-p} \int \mathds{1}\(\bar p\(x\) > c_0\) \(1-p\(x\)\) f\(x\) \mathrm{d}x.
\end{split}\end{align}
}
which does not even depend on $c$. Therefore the ROC of the predicted doctor
		 should be a singleton point.
\end{proof}

\begin{lemma}\label{aboveROC}
If humans make correct use of additional information $U$ beyond the observed
features $X$, and predicted humans' ROC lies above the machine learned ROC.
\end{lemma}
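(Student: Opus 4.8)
The plan is to treat this as a \emph{more information dominates} statement that follows directly from the Neyman--Pearson optimality developed in Section 2. The humans threshold the correctly specified full-information propensity score $p\(X,U\) = \mathbb{P}\(Y=1 \vert X,U\)$, while the machine thresholds $p\(X\) = \mathbb{E}\[p\(X,U\) \vert X\]$. The key observation I would build on is that every classifier the machine can construct is a function of $X$ alone, and is therefore also a (degenerate) classifier based on the richer information $\(X,U\)$. Hence each point on the MROC is an achievable (size, power) pair in the $\(X,U\)$ world, and the Neyman--Pearson lemma applied to the $\sigma$-field $\sigma\(X,U\)$ says that the $p\(X,U\)$-ROC is the optimal frontier over \emph{all} $\sigma\(X,U\)$-measurable tests, so it cannot lie below the MROC frontier.

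First I would fix an arbitrary threshold $c$ and consider the corresponding MROC point generated by the rule $\mathds{1}\(p\(X\) > c\)$, with size $\alpha = \mathbb{P}\(p\(X\) > c \,\vert\, Y=0\)$ and power $\beta = \mathbb{P}\(p\(X\) > c \,\vert\, Y=1\)$. Since this rule is $\sigma\(X\) \subseteq \sigma\(X,U\)$ measurable, it is a valid test based on $\(X,U\)$ of size $\alpha$ and power $\beta$. By the Bayes-law reformulation of the Neyman--Pearson lemma in Section 2, thresholding $p\(X,U\)$ is equivalent to thresholding the likelihood ratio $f\(X,U \vert Y=1\)/f\(X,U \vert Y=0\)$, and is therefore most powerful among all tests based on $\(X,U\)$ for whatever size it attains. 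Consequently there is a threshold $c'$ for which $\mathds{1}\(p\(X,U\) > c'\)$ has size $\alpha$ and power at least $\beta$; this point lies on the $p\(X,U\)$-ROC and dominates the chosen MROC point. Letting $c$ vary over $\[0,1\]$ establishes weak domination at every size $\alpha$.

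Next I would record the two corollaries emphasized in the text. Because the homogeneous human rule $\hat Y = \mathds{1}\(p\(X,U\) > c\)$ traces out exactly the $p\(X,U\)$-ROC as $c$ varies, the aggregate human TPR/FPR pair is by construction a point on that ROC, and hence, by the domination just shown, lies weakly above the MROC. Strictness holds precisely when $U$ carries genuine information about $Y$ beyond $X$, i.e.\ when $p\(X,U\)$ is not almost surely a function of $X$ alone: in that case the likelihood ratio based on $\(X,U\)$ fails to be $\sigma\(X\)$-measurable, so the Neyman--Pearson improvement over the $X$-only test is strict at the relevant size.

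The hard part will be the exact size-matching in the Neyman--Pearson step: when the distribution of $p\(X,U\)$ has atoms, no nonrandomized threshold need attain size exactly $\alpha$, so the literal collection of points $\{\(\alpha,\beta\) : \mathds{1}\(p\(X,U\)>c'\)\}$ may have gaps. I would resolve this with the randomized-test device already introduced in Section 2 (the \emph{Randomized tests} discussion), replacing $\mathds{1}\(p\(X,U\) > c'\)$ by the corresponding randomized likelihood-ratio test $\phi\(X,U\)$ of size exactly $\alpha$. Randomization only breaks boundary ties and does not change the conclusion that the optimal $\(X,U\)$-frontier dominates the MROC; with this in hand the weak (and, under genuine informativeness, strict) domination is complete.
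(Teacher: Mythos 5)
Your proposal is correct and takes essentially the same approach as the paper: both arguments rest on the Neyman--Pearson optimality of thresholding $p(X,U)$ among all $(X,U)$-based tests, combined with the observation that the machine rule $\mathds{1}(p(X)>c)$ is itself such a test whose size and power coincide with the corresponding MROC point (the paper verifies this operating-point equality by an explicit iterated-expectations computation, which your $\sigma(X)\subseteq\sigma(X,U)$ measurability remark expresses in one line). Your randomized-test patch for atoms is a technical refinement that the paper sidesteps by implicitly working with continuous densities.
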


\begin{proof}
For a formal proof, note that the ROC curve is defined by the locus of the pair of
functions
$\(\alpha\(c\), \beta\(c\), c\in \(0,1\)\),$
	where
{
\begin{align}\begin{split}\nonumber
\alpha\(c\) =& \frac{1}{1 - p}\iint \mathds{1}\(p\(x,u\) > c\) \(1-p\(x,u\)\) f\(x,u\) \mathrm{d}x \mathrm{d}u\\
\beta\(c\) =& \frac{1}{p}\iint \mathds{1}\(p\(x,u\) > c\) p\(x,u\) f\(x,u\)
\mathrm{d}x \mathrm{d}u.
\end{split}\end{align}
}
By definition and by Neyman-Pearson arguments, it must lie above the ROC curve defined by
the locus of
$\(\bar\alpha\(c\), \bar\beta\(c\), c\in \(0,1\)\)$, where
{
\begin{align}\begin{split}\nonumber
	\bar\alpha\(c\) =& \frac{1}{1-p} \iint \mathds{1}\(p\(x\) > c\) \(1-p\(x,u\)\) f\(x,u\) \mathrm{d}x \mathrm{d}u  \\
\bar\beta\(c\) =& \frac1p \iint \mathds{1}\(p\(x\) > c\) p\(x,u\) f\(x,u\) \mathrm{d}x \mathrm{d}u.
\end{split}\end{align}
}
But
{
\begin{align}\begin{split}\nonumber
\bar\beta\(c\) = \frac1p \int \mathds{1}\(p\(x\) > c\) \int p\(x,u\) f\(u \vert x\) \mathrm{d}u
f\(x\) \mathrm{d}x
= \frac1p \int \mathds{1}\(p\(x\) > c\) p\(x\) f\(x\) \mathrm{d}x,
\end{split}\end{align}
}
and similarly,
{
\begin{align}\begin{split}\nonumber
	\bar\alpha\(c\) =& \frac{1}{1-p} \int \mathds{1}\(p\(x\) > c\) \int \(1-p\(x,u\)\) f\(u \vert x\) \mathrm{d}u
f\(x\) \mathrm{d}x  \\
	=& \frac1p \int \mathds{1}\(p\(x\) > c\) \(1-p\(x\)\) f\(x\) \mathrm{d}x,
\end{split}\end{align}
}
Consequently, $\bar\alpha\(c\), \bar\beta\(c\), c \in \(0,1\)$ is
also the ROC curve defined by $p\(X\)$.

Therefore, for the case of homogeneous preference but heterogeneity in information, if the information is correct, the aggregate PTPR/PFPR pair must lie above the machine optimal ROC.
\end{proof}

\paragraph{Generalization of \cite{sherman1993limiting} under misspecification}

Without the single index assumption, suitable location and scale normalization, including removing
the constant term in $X_i$ and fixing the first element of $\theta$ at $1$ as in
as in \cite{sherman1993limiting}, is still necessary for identification. Here we
describe the generalization of section 6 in \cite{sherman1993limiting} to allow
for misspecification of the single index model.

Recall the kernel function for the U-statistics on pp 129 \cite{sherman1993limiting}, for $Z=\(Y,X\)$,
\begin{align}\begin{split}\nonumber
\tau\(z,\theta\)=&\mathbb{E} \[\mathds{1}\(x'\beta\(\theta\) > X'\beta\(\theta\)\) \mathds{1}\(y > Y\)
+\mathds{1}\(x'\beta\(\theta\) < X'\theta\) \mathds{1}\(y < Y\)\]\\
=&
\mathbb{E} \[\mathds{1}\(x'\beta\(\theta\) > X_j'\beta\(\theta\)\) \(\mathds{1}\(y > Y_j\)
- \mathds{1}\(y < Y_j\)\)\] + \mathbb{E} \[\mathds{1}\(y < Y_j\)\]
 \end{split}\end{align}
Without requiring the single index model, Theorem 4 of
\cite{sherman1993limiting} shows that
\begin{align}\begin{split}\nonumber
\sqrt{n}\(\hat\theta-\theta^*\)
= \frac{1}{\sqrt{n}} \sum_{i=1}^n  \kappa_i + o_P\(\frac{1}{\sqrt{n}}\),
\quad \kappa_i= V^{-1} \frac{\partial}{\partial \theta}
\tau\(Z_i, \theta\),  \quad
2V = \frac{\partial^2}{\partial \theta\partial \theta'} E \tau\(Z_i, \theta\).
\end{split}\end{align}
The influence function can be computed without referencing the single index
assumption and through the use of a Dirac function $\delta_0\(\cdot\)$,
\begin{align}
    \begin{split}\nonumber
\frac{\partial}{\partial \theta} \tau\(Z_i, \theta\)
        =&
        \frac{\partial}{\partial\theta} \iint
        \mathds{1}\(\(X_i - w\)'\beta\(\theta\) > 0\) \(\mathds{1}\(Y_i > z\) - \mathds{1}\(Y_i < z\)\) f_{Y, X}\(z, w\) \mathrm{d}w
        \mathrm{d}z\\
        =&
        \int
        \(\bar X_i -\bar w\) \delta_0\(\(X_i - w\)'\beta\(\theta\)\) S\(Y_i, w\)  f_X\(w\) \mathrm{d}w
    \end{split}
\end{align}
where $\bar X_i$ and $\bar w$ denote the corresponding vectors without the first
element,
\begin{align}\begin{split}\nonumber
S\(Y_i, w\) = \int \(\mathds{1}\(Y_i > z\) - \mathds{1}\(Y_i < z\)\)  f_{Y \vert X}\(z
\vert w\) \mathrm{d}z
= 2 F_{Y \vert X}\(Y_i \vert w\) - 1.
 \end{split}\end{align}
Consider the transformation $w \Longrightarrow  \(t = w'\beta\(\theta\)=w_1 + \bar
w'\theta, \bar w\)$, with the inverse transformation
$\(t, \bar w\) \Longrightarrow \(t - \theta'\bar w, \bar w\)$.
Note that the determinant of this transform is $1$. Then we can apply this transformation to
write
\begin{align}
    \begin{split}\nonumber
\frac{\partial}{\partial \theta} \tau\(Z_i, \theta\)
=&
        \iint
        \(\bar X_i -\bar w\) \delta_0\(t-X_i'\beta\(\theta\)\) S\(Y_i, \(t-\bar w'\theta,\bar w\)\)
        f_X\(\bar w, t\) \mathrm{d}\bar w \mathrm{d}t\\
        =&\int
        \(\bar X_i -\bar w\) S\(Y_i, \(X_i'\beta\(\theta\)-\bar w'\theta,\bar w\)\)
        f_X\(\bar w, X_i'\beta\(\theta\)\) \mathrm{d}\bar w
    \end{split}
\end{align}
Further ``simplication'' beyond this point does not appear to be feasible
without the single index and correct specification conditions in \cite{sherman1993limiting}.
Yet the above influence function is still valid for asymptotic inference.
The Hessian term can be computed by further differentiation,
\begin{align}
    \begin{split}\nonumber
        2V =\iint \frac{\partial}{\partial\theta}
\frac{\partial}{\partial \theta} \tau\(z, \theta\)
f_{Y,X}\(y, x\)
        \mathrm{d}y\mathrm{d}x
    \end{split}
\end{align}
which in general will
involve both $\frac{\partial}{\partial\theta}
S\(Y_i, \(t-\bar w'\theta,\bar w\)\)$ and
$\frac{\partial}{\partial\theta} f_X\(\bar w, X_i'\beta\(\theta\)\)$.

Similar to \cite{sherman1993limiting}, numerical differentiation can be employed to consistently
estimate the influence function (hence its asymptotic variance as well), and the
limiting Hessian, using step sizes that satisfy $\sqrt{n} \epsilon_n \rightarrow
\infty$ and $n^{1/4} \epsilon_n \rightarrow \infty$, respectively.
The arguments for the consistency of numerical derivatives
in \cite{sherman1993limiting} do not depend on a correct specified single index model
either, and only depend on the Euclidean properties of a monotone transformation
of the linear index function. See also \cite{hong2015extremum}.

\FloatBarrier
\phantomsection
\addcontentsline{toc}{section}{References}
\ifx\undefined\bysame
\newcommand{\bysame}{\leavevmode\hbox to\leftmargin{\hrulefill\,\,}}
\fi

\bibliographystyle{aer}

\end{document}